\newtheorem{proposition}{Proposition}
\providecommand{\U}[1]{\protect\rule{.1in}{.1in}}
\newcolumntype{Y}{>{\centering\arraybackslash}X}
\titleclass{\subsubsubsection}{straight}[\subsubsection]
\newcounter{subsubsubsection}
\renewcommand{\thesubsubsubsection}{\arabic{subsubsubsection}}
\definecolor{darkblue}{rgb}{0,0,0.55}
\definecolor{darkred}{rgb}{0.5,0,0}
\newcommand\Bheadfont{\fontsize{14pt}{\baselineskip}\selectfont}
\titleformat{\section}[hang] {\normalfont\sc\color{darkblue}\Bheadfont} {\thesection\hskip0.618em}{0em}{}
\titleformat{\subsection}[runin]
{\normalfont\sc\color{darkblue}} {\thesubsection\hskip0.618em}{0em}{}
\titleformat{\subsubsection}[runin]
{\normalfont\sc\color{darkblue}} {\thesubsubsection\hskip0.618em}{0em}{}
\titleformat{\subsubsubsection}[runin]
{\normalfont\sc\color{darkblue}} {\thesubsubsubsection\hskip0.618em}{0em}{}
\newcommand{\bbm}{\begin{bmatrix}}
\newcommand{\ebm}{\end{bmatrix}}
\newcommand{\bd}{\begin{description}}
\newcommand{\ed}{\end{description}}
\title{Deep Reinforcement Learning in a Monetary Model}
\author{Mingli Chen\thanks{University of Warwick} \and Andreas Joseph\thanks{Bank of England. {\it Disclaimer:} The views expressed in this work do not necessarily represent those of the Bank of England or its committees.} \and Michael Kumhof\footnotemark[2] \and Xinlei Pan\thanks{University of California, Berkeley}   \and Xuan Zhou\thanks{Deakin University\newline
The support of the Economic and Social Research Council (ESRC) is gratefully acknowledged, via the Rebuilding Macroeconomics Network (Grant Ref: ES/R00787X/1). We would like to especially thank Rui Shi for her contribution in the early stages of the project. We thank Angus Armstrong, Roger Farmer, Ekaterina Svetlova, and Yaolang Zhong for useful comments. All errors are ours.}}
\date{ \today}
\begin{document}

\maketitle
 
\begin{abstract}
We propose using deep reinforcement learning to solve dynamic stochastic general equilibrium models. Agents are represented by deep artificial neural networks and learn to solve their dynamic optimisation problem by interacting with the model environment, of which they have no a priori knowledge. Deep reinforcement learning offers a flexible yet principled way to model bounded rationality within this general class of models.  We apply our proposed approach to a classical model from the adaptive learning literature in macroeconomics which looks at the interaction of monetary and fiscal policy. We find that, contrary to adaptive learning, the artificially intelligent household can solve the model in all policy regimes. 
 
\end{abstract}
\bigskip
\noindent \textbf{Key Words}:  Artificial Intelligence, Deep Reinforcement Learning, Adaptive Learning, Monetary Policy, Fiscal Policy.
 \medskip{}

\noindent \textbf{JEL Codes: } C14, C52, D83, E52, E62 \medskip{}
\newpage

\section{Introduction}
\label{sec:intro}

Agent expectations are central to macroeconomics.  The idea of rational expectations assumes individual rationality and consistency of expectations for all the agents in the model, and when implemented numerically or econometrically, rational expectations models impute much more knowledge to the agents within the model than is possessed by an econometrician \citep{sargent1993bounded, evans2009learning}. While the rational expectations benchmark is a natural one to consider, the assumptions underpinning it are strong, and one may wonder if they should be relaxed \citep{woodford2013macroeconomic}.

The literature on Adaptive Learning \citep{sargent1993bounded, EH2001book}, one of the leading paradigms in the learning literature and the one that we take as our starting and reference point in this paper, retains the assumption of individual rationality, while replacing the assumption of consistency of expectations with the assumption that agents form their expectations adaptively, and use recursive linear least squares as a forecasting rule. These forecasts are an input into agent's decision rules, and in each period the economy attains a temporary equilibrium. Models populated with \textit{Adaptively Learning Agents} put the agents on an equal footing with the econometrician who is observing data from the model. However, this type of parametric recursive method assumes that agents correctly specify the laws of motion and other relevant functional relationships of the model. By assumption, the predictions of this econometric model need not coincide with the predictions of the true model. It is, therefore, important to correctly specify the reduced form forecasting rule such that the learning agent's expectations converge to the rational ones (in this case, an equilibrium is referred to as learnable). Moreover, economic dynamics, e.g. the stability of central bank policies such as Taylor rules or forward guidance, may be different under adaptively learning agents compared to fully rational ones \citep{EP2018imperfect}.

In this paper, we propose to combine a standard dynamic stochastic general equilibrium modeling approach with flexible expectations formation, by using modern developments in deep reinforcement learning \citep{mnih2015human}. Specifically, we work with models populated by \textit{Deep Reinforcement Learning Agents} (a.k.a. \textit{Artificially Intelligent Agents}) who have no a priori knowledge about the structure of the economy, and, instead, use their utility realisations in response to their actions in order to learn nonlinear decision rules via deep artificial neural networks \citep{goodfellow2016deep}. Artificially Intelligent Agents can be trained to learn a good strategy to apply within an environment that is complicated \citep{Sutton1998RL}. We adopted a policy-based deep reinforcement learning approach that can deal with high dimensional continuous action spaces \citep{Haarnoja2018sac}. Our approach enables agents to learn flexibly, as our learning algorithms are nonparametric and recursive, therefore reducing the risk of misspecification induced by the parametric approach. Allowing for misspecification and learning via expelling rational expectation agents and replacing them with “artificially intelligent”  ones is also reminiscent of the paradigm of \cite{sargent1993bounded} populating models with boundedly rational agents via introducing artificially intelligent agents.  

Reinforcement learning is about understanding how agents learn and make optimal decisions through repeated experience \citep{Sutton1998RL}. Agents strive to maximize some long-term reward, similar to the cumulated discounted sum of future utilities as in classical macroeconomic models, by interacting with a generally unknown environment. The environment in our case is the model economy. Agents obtain observations of state variables from this environment. They take actions based on these observations, e.g. how much to consume this period. They are then rewarded by the environment, which also returns a new state observation, and so on. The two main differences to adaptive learning are that agents in reinforcement learning have very limited information about the economy, and that neither their behaviour nor their expectation process is explicitly formulated. Thus, inference happens from their reward function together with the transitions generated by interacting with the environment.\footnote{Reinforcement learning is related to the value function iteration procedure which has been used in the context of dynamics programming in economics.}

The recent advances of deep reinforcement learning -- reinforcement learning with function approximation by deep artificial neural networks -- have improved the performance of traditional reinforcement learning in several very challenging domains such as computer games and simulated robotics in the computer science and machine learning community \citep{mnih2013atari,mnih2015human}. The use of deep artificial neural networks, which belongs to the class of universal function approximators \citep{Cybenko1989,goodfellow2016deep}, in reinforcement learning reduces the risk of misspecification and is, at the same time, at the forefront of advances in artificial intelligence, where agents learn to master complex dynamic environments. We contribute to the literature by showing that deep reinforcement learning can be used by economists to solve complex behavioral problems, and that this approach holds promise for modelling expectations in economic models. 

We start with investigating how our proposed deep reinforcement learning approach enables agents to learn ``sensible'' equilibria, and comparing with results from Adaptively Learning Agents. We apply our approach to a classical model from the learning literature in macroeconomics, which looks at the interaction of monetary and fiscal policy with a single representative household agent. The model considers inflation and debt dynamics under a global Taylor-rule. Given the zero lower bound on interest rates, a global Taylor-rule is known to generate two steady states of inflation steady state. One is the inflation target, while the other is a low inflation ``liquidity trap'', where is a continuum of perfect foresight paths that start around the inflation target and converge to the low inflation steady state.\footnote{More detailed discussions can be found in \cite{BSGU2001a,BSGU2001b} among others.}  \cite{EH2005,Eusepi2007,EH2008} have studied the E-stability properties of these two steady states under adaptive learning, and show that the learnability of the two steady states depends on the specifications of monetary and fiscal policy.
 
We find that when an active fiscal or monetary policy is paired with a passive policy, i.e. active monetary policy and passive fiscal policy, or, passive monetary policy and active fiscal policy, the corresponding rational expectations equilibria are determinate, and both Adaptively Learning Agent and Artificially Intelligent Agent can learn these equilibria. However, when both monetary and fiscal policy are active (or passive), the corresponding rational expectations equilibria are explosive (or indeterminate), and contrary to the Adaptively Learning Agent, our Artificially Intelligent Agent remains capable of learning these equilibria. Our results also echo some early results in the literature comparing adaptive learning with artificial agents, e.g. learning via genetic algorithms \citep{arifovic1995genetic}. The main reason why all regimes are learnable by the Artificially Intelligent Agent as compared to the Adaptively Learning Agent is that the former is not constrained by the dynamics of the linearised system, but instead by the ``global map'' given by long-term utility maximization. The more general results obtained from our Artificially Intelligent Agent also means that the economy can end up in potentially more states than previously thought.

During learning, we assess the state of learning or the state of (bounded) rationality of our agents. The Artificially Intelligent Agent's learning can generally be characterised by three phases: an initial {\it random phase} due to algorithm initialisation, followed by a {\it learning phase}, and finally the achievement of the rational expectations equilibrium, or a {\it rational phase}. We define a set of measures to quantify this process, which we label first-order condition distances and steady state distances for learning about behaviour and state values, respectively. The agent's behaviour and learned state values constitute solutions of the model during each of the three phases, though suboptimal before reaching the rational phase.

Finally, we show how the implicit expectations of the Artificially Intelligent Agent can be extracted in a small numerical experiment that looks into the process of learning about inflation expectations. Together with the proposed measures of bounded rationality, this type of analysis may help to eventually bring the proposed approaches to the data. 

The remainder of this paper is structured as follows. Section \ref{sec:model} introduces the model. Section \ref{sec:learning} provides introductions to both adaptive and (deep) reinforcement learning. Section\ \ref{sec:results} presents the main results. We conclude with a general discussion in Section\ \ref{sec:conclusion}. Auxiliary information is provided in the \hyperref[sec:app]{Appendix}.

\section{Model}
\label{sec:model}

The model closely follows \cite{BSGU2001b} and \cite{EH2005}. Time is discrete and prices are flexible.

\subsection{Household}
There is a single representative household who discounts the future at a rate $\beta \in (0, 1)$. The agent has access to fiat money and nominal government bonds. Agents seek to maximize their utility, which depends on consumption, real money balances and hours worked, subject to an inter-temporal budget constraint. Formally, the household solves the following problem 

\begin{equation}
\label{eq:value}
\max_{c_t,m_t,n_t}\ \ \ \ \mathbb{E}_{0}\sum_{t=0}^{\infty }\beta ^{t} U(c_t,m_t,n_t)  \; \; \text{s.t.}
\end{equation}

\begin{equation}
M_t+B_{t}+P_{t}c_{t}=M_{t-1}+B_{t-1}R_{t-1}+ W_tn_t-P_t\tau_t ,
\end{equation}
where $P_t$ is the price level at time $t$, $x_t=\frac{X_t}{P_t},\,X_t \in\{M_t,B_t,C_t,W_t\}$. $x_t$ denote the the real levels of money, government bonds, consumption and wages. $n_t$ is the hours worked. The household pays $\tau_t$ as a real lump-sum tax to the government each period.

\subsection{Optimality conditions}
The agent chooses sequences of money, bonds, labour supply and consumption, taking the price of goods, the real wage and nominal interest rates as given. The optimality conditions are given by
 \begin{align}
&\text{Euler Equation:\ \ \ \ }  U_{c,t}=\beta E_t {U_{c, t+1}} \frac{R_t}{\pi_{t+1}} \label{eq:euler}\\
&\text{Money Demand:\ \ \ } \frac{U_{m,t}}{U_{c,t}}=\frac{R_t-1}{R_t} \label{eq:md}\\
&\text{Labor Supply:\ \ \ } -\frac{U_{n,t}}{U_{c,t}}=w_t \label{eq:ls}
\end{align}
where $R_t$ is the nominal interest rate on government bonds, and $\pi_t$ is the inflation rate at time t. 
% >>> AJ: Do we need this (standard) assumption here, which we actually do not mention?
%The assumptions about utility guarantees that the demand for money gives a negative relation between real money demand and the nominal interest rate.
%
We follow \cite{EH2005} by adopting a utility function of the form \footnote{We assume the utility function is separable for simplicity. Our main results do not change if non-separable function is applied, see \cite{BSGU2001a} and \cite{Eusepi2007} for details.}
 \begin{align}
% U(c_t, m_t)=\frac{[(1-\psi)m_t^\mu+\psi c_t^\mu]^\frac{1-\sigma}{\mu}}{1-\sigma} \label{eq:utility}
U(c_t, m_t,n_t)=\frac{c_t^{1-\sigma}}{1-\sigma}+\chi \frac{m_t^{1-\sigma}}{1-\sigma}- \frac{n_t^{1+\varphi}}{1+\varphi}.\label{eq:utility}
\end{align}
Combining the utility \eqref{eq:utility} and the household optimality conditions \eqref{eq:euler} -- \eqref{eq:ls}, we arrive the Euler equation
 \begin{align}
1=\beta E_t (\frac{c_{t+1}}{c_t})^{-\sigma}\frac{R_t}{\pi_{t+1}} \,, \label{eq:fisher}
\end{align}
the real money demand
 \begin{align}
%m_t=m(R_t)=y\big(\frac{\psi}{1-\psi}\frac{R_t-1}{R_t}\big)^\frac{1}{\mu-1}
m_t=\chi^{1/\sigma }c_t\big(\frac{R_t-1}{R_t}\big)^{-1/\sigma}\,, \label{eq:md1}
\end{align}
and the labour supply equation based on the real wage $w_t$
 \begin{align}
w_t=c_t^\sigma n_t^\varphi.  \label{eq:ls1}
\end{align}
\subsection{Firms}
A representative firm is assumed with a production function constant in return to scale given by 
%
%y_t=\varepsilon^y_t n_t^{1-\eta}
\begin{align}
y_t=\varepsilon^y_t n_t^{1-\eta} ,
\label{eq:production}
\end{align}
where $\varepsilon^y_t$ is the technology which evolves exogenously with a unit mean. Each period the firm maximizes profits as the difference between production and the wage bill by setting the real wage rate, i.e.
\begin{align}
\max_{w_t}\, y_t - w_t n_t\,,
\end{align}
yielding the optimality condition for wages
% w_t =(1-\eta) \varepsilon^y_t n_t^{-\eta} \,. \label{eq:ld}
 \begin{align}
w_t =(1-\eta) \varepsilon^y_t n_t^{-\eta}  \,. \label{eq:ld}
\end{align}

\subsection{Market Clearing}
We assume that markets clear in every period. The market clearing conditions for the goods market is
\begin{equation}
c_t=y_t\,.\label{eq:gmc}
\end{equation}
The labour market clears combining (\ref{eq:ld}) and (\ref{eq:ls1}), yielding the condition
\begin{equation}
c_t^\sigma n_t^\varphi=(1-\eta)\varepsilon^y_t n_t^{-\eta},
 \label{eq:lmc_opts}
\end{equation}
The market clearing conditions \eqref{eq:gmc}-\eqref{eq:lmc_opts} together with production function \eqref{eq:production} imply that output, consumption and labor depend on technology shock,
\begin{equation}
c_t=y_t=\varepsilon^y_t n^{1-\eta}_t=[(1-\eta){\varepsilon^y_t}^\frac{1+\varphi}{1-\eta} ]^{\frac{1}{\sigma+(\varphi+\eta)/(1-\eta)}}.
 \label{eq:output}
\end{equation}

\subsection{Government Budget Constraint and Policy Rules}
The government issues interest-bearing bonds and non-interesting bearing currency (money), and collects taxes. It operates under the real inter-temporal {\it government budget constraint} (GBC)

\begin{equation}
m_t+b_t+\tau_t=\frac{m_{t-1}}{\pi_t}+R_{t-1} \frac{b_{t-1}}{\pi_{t}}\,, \label{eq:gbc}
\end{equation}
subject to the transversality condition
\begin{equation}
\lim_{j\rightarrow \infty}  \prod_{k=0}^j (\frac{\pi_{t+k}}{R_{t+k-1}}  ) b_{t+j}=0\,.\label{eq:transversality}
\end{equation}

{\it Fiscal policy} takes the linear tax rule as in \cite{leeper1991equilibria}
 \begin{align}
\tau_t=\gamma_0+\gamma b_{t-1}+\varepsilon_t^\tau\,, \label{eq:fp}
\end{align}
where $\varepsilon_t^\tau$ is an exogenous random shock that is assumed to be i.i.d. with mean zero. We also make the natural assumption that $0 \leq \gamma \leq \beta^{-1}$. We follow the terminology of \cite{leeper1991equilibria} to define fiscal policy as being {\it active} if $\gamma <\beta^{-1}-1$ and {\it passive} if $\gamma >\beta^{-1}-1$. 

{\it Monetary policy} follows \cite{BSGU2001b}  and \cite{EH2005} with a global non-linear interest rate rule
\begin{align}
R_t-1=\varepsilon^R_t f(\pi_t). \label{eq:gtaylor}
\end{align}
The function $f(\pi)$ is assumed to be non-negative and nondecreasing, while $\varepsilon^R_t$ is an exogenous, i.i.d. and positive random shock with a mean of one. We adopt the notation $\alpha = f'(\pi_t)$ and use the functional form
\begin{align}
f(\pi_t)=(R^*-1)(\frac{\pi_t}{\pi^*})^{\frac{AR^*}{R^*-1}}\,, \label{eq:f}
\end{align}
where $A>1$, and $\pi^*>1$ is the inflation target of the monetary authority. 
This specification of monetary policy implies that the nominal interest rate is strictly positive and strictly increasing in the inflation rate. We refer to monetary policy as {\it active (passive)} if the monetary authority raises the nominal interest rate by {\it more (less) than one-for-one} in response to an increase in the inflation rate, that is, if $\alpha>(<)1$.

\subsection{Steady State and Rational Expectations Equilibrium}
The deterministic steady states in the absence of random shocks is characterised by the following set of equations:
\begin{align}
&\text{Euler / Fisher Equation:\ \ }  R=\frac{\pi}{\beta} \label{eq:fisher_ss}\\
%&\text{Money Demand:\ \ \ \ \ \ \ \ \ \ \ }  m=y\big(\frac{\psi}{1-\psi}\frac{\pi-\beta}{\pi}\big)^\frac{1}{\mu-1} \label{eq:md_ss}\\
&\text{Money Demand:\ \ \ \ \ \ \ \ \  \ \ }  m=y\big(\frac{\pi-\beta}{\chi \pi}\big)^{-1/\sigma} \label{eq:md_ss}\\  
&\text{Monetary Policy:\ \ \ \ \ \ \ \ \ \ }   R=1+(R^*-1)(\frac{\pi}{\pi^*})^{\frac{AR^*}{R^*-1}} \label{eq:mp_ss}\\
&\text{Fiscal Policy \& GBC:\ \ \ \ \ }  b=(\frac{1}{\beta}-1-\gamma)^{-1}[\gamma_0+(1-\frac{1}{\pi})m ] \label{eq:fp_ss}\\
&\text{Output:\ \ \ \ \ \ \ \ \ \ \ \ \ \ \ \ \ \ \ \ \ \ \ \  }  y^{\sigma+\frac{\eta+\varphi}{1-\eta}}=1-\eta
\end{align}

Equation \eqref{eq:fisher_ss} and \eqref{eq:mp_ss} together determine the steady state of inflation:
\begin{align}
\frac{\pi}{\beta}=1+(R^*-1)(\frac{\pi}{\pi^*})^{\frac{AR^*}{R^*-1}} \label{eq:infl_ss}
\end{align}

If $f(\cdot)$ is continuous and differentiable as in \eqref{eq:f}, and has a steady state $\pi^*$ with $f'(\pi^*)>1$, in accordance with the Taylor principle given by \eqref{eq:gtaylor}, non-negativity of nominal interest rate implies the existence of a second low inflation steady state $\pi_L$ with $f'(\pi_L)<1$. Figure\ \ref{fig:ss} illustrates this multiplicity of steady-state inflation via the intersection of the Fisher equation and monetary policy. 

\begin{figure}[!ht] \centering%
\includegraphics[height=5.5cm,width=0.45\textwidth]{./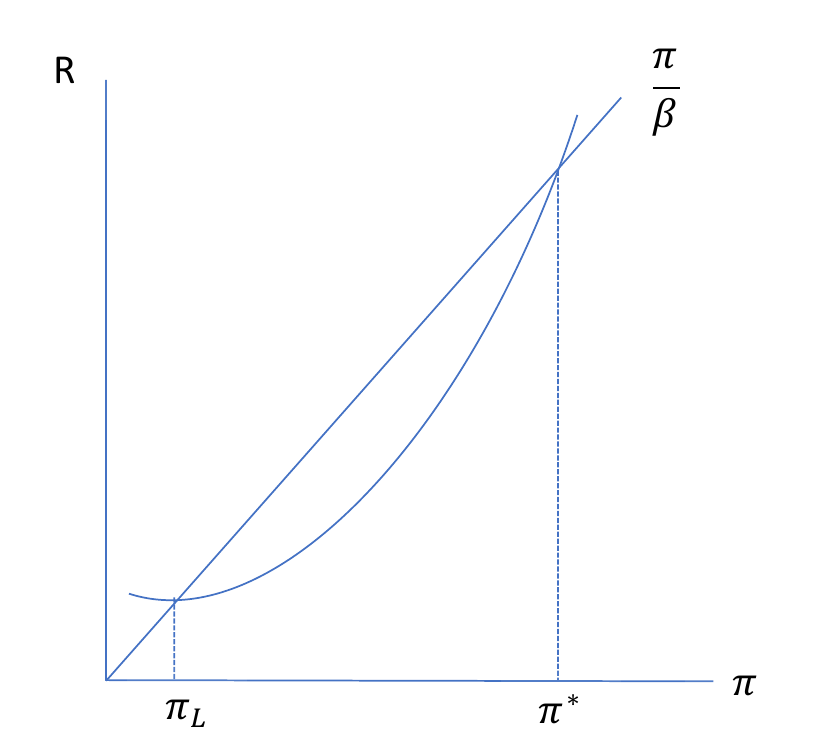} 
\caption{The two steady states of inflation correspond to the intersection between the Fisher equation and the Taylor rule.} \label{fig:ss} 
\end{figure}

These results are formalised by 

\begin{proposition}
\text{[\cite{BSGU2001b}]} 
There exist two steady states of inflation. The first one is an inflation rate $\pi^* \geq 1$ at which the steady state Fisher equation is satisfied and at which the feedback rule is active; that is, $R^*=\frac{1}{\beta}\pi^*$ and $f'(\pi^*)=\frac{A}{\beta}>\beta^{-1}$. The second one is an inflation rate $\pi_L<\pi^*$ such that the steady state Fisher equation is satisfied and the interest rate rule is passive; that is $R_L=\frac{1}{\beta}\pi_L$ and $f'(\pi_L)=\frac{A}{\beta}(\frac{\pi_L}{\pi^*})^{\frac{(A-1)R^*+1}{R^*-1}}<\beta^{-1}$.
\end{proposition}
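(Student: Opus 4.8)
The plan is to recast the steady-state inflation equation \eqref{eq:infl_ss} as a one-dimensional root problem and exploit convexity. Using \eqref{eq:f}, set
\[
g(\pi) \;:=\; 1 + f(\pi) - \frac{\pi}{\beta}, \qquad f(\pi) = (R^*-1)\Big(\frac{\pi}{\pi^*}\Big)^{\theta}, \qquad \theta := \frac{AR^*}{R^*-1},
\]
so that \eqref{eq:infl_ss} reads $g(\pi)=0$ and the steady-state inflation rates are exactly the positive zeros of $g$. First I would record the parameter facts used below: from $\pi^*\ge 1$ and $\beta\in(0,1)$ we get $R^* = \pi^*/\beta>1$, hence $R^*-1>0$; and $A>1$ then gives $\theta>\frac{R^*}{R^*-1}>1$, with $\theta-1 = \frac{(A-1)R^*+1}{R^*-1}>0$. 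Differentiating, $f'(\pi) = \frac{(R^*-1)\theta}{\pi^*}\,(\pi/\pi^*)^{\theta-1}$, so $f'(\pi^*) = \frac{(R^*-1)\theta}{\pi^*} = \frac{AR^*}{\pi^*} = \frac{A}{\beta}$.

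\emph{Target steady state.} Here I would simply evaluate $g(\pi^*) = 1 + (R^*-1) - \pi^*/\beta = R^* - \pi^*/\beta$, which vanishes since $R^* = \pi^*/\beta$ is the steady-state Fisher relation \eqref{eq:fisher_ss} (with $R=R^*$) pinning down $R^*$; hence $\pi^*$ is a steady state, and it is active because $f'(\pi^*) = A/\beta > 1/\beta = \beta^{-1}$ as $A>1$. This establishes the first assertion, with $R^* = \pi^*/\beta$ and $f'(\pi^*)=A/\beta$ as stated.

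\emph{Low-inflation steady state.} Since $R^*-1>0$ and $\theta>1$, the map $f$ is strictly convex on $(0,\infty)$, so $g=1+f-\pi/\beta$ is strictly convex there and hence has at most two positive zeros. One is $\pi^*$; moreover $g'(\pi^*) = f'(\pi^*)-\beta^{-1} = (A-1)/\beta>0$, so $g$ is strictly negative just to the left of $\pi^*$, while $\lim_{\pi\downarrow 0}g(\pi) = 1>0$. The intermediate value theorem then yields a zero $\pi_L\in(0,\pi^*)$, and strict convexity forces $\pi_L$ to be the unique zero in $(0,\pi^*)$ and the only zero besides $\pi^*$. A strictly convex function vanishing at $\pi_L<\pi^*$ is positive on $(0,\pi_L)$ and negative on $(\pi_L,\pi^*)$, hence strictly decreasing at $\pi_L$: $g'(\pi_L) = f'(\pi_L)-\beta^{-1}<0$, i.e.\ the interest-rate rule is passive at $\pi_L$. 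Substituting the formula for $f'$ and the identity $\theta-1 = \frac{(A-1)R^*+1}{R^*-1}$ gives $f'(\pi_L) = \frac{A}{\beta}\big(\frac{\pi_L}{\pi^*}\big)^{\frac{(A-1)R^*+1}{R^*-1}}$, while \eqref{eq:fisher_ss} gives $R_L = \pi_L/\beta$, matching the statement. As a closing check I would note $g(\beta) = (R^*-1)(\beta/\pi^*)^{\theta}>0$, which with the sign pattern above forces $\pi_L>\beta$ and hence $R_L>1$, so the nominal rate is strictly positive and the money demand \eqref{eq:md_ss} at $\pi_L$ is well defined, confirming $\pi_L$ is a genuine steady state.

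The step I expect to be the main obstacle is the second one: ruling out that $\pi^*$ is a tangential (double) zero and instead producing a genuinely distinct crossing $\pi_L$ on the low-inflation side with the correct (passive) slope sign there. Both follow from the single structural fact that the policy exponent $\theta$ exceeds one, which makes $g$ strictly convex; without it, the monotonicity of $f$ alone would not pin down either the number of steady states or the position of $\pi_L$ relative to $\pi^*$. The remaining work is merely the elementary evaluations of $g$ and $g'$ at $0$, $\beta$ and $\pi^*$.
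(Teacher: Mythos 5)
Your proof is correct, and it is worth noting that the paper itself supplies no proof of this proposition at all: it simply cites \cite{BSGU2001b} and appeals to the graphical intersection of the Fisher line $R=\pi/\beta$ with the Taylor rule $R=1+f(\pi)$ in Figure~\ref{fig:ss}. Your convexity argument is therefore a genuine, self-contained replacement for that informal picture. The key structural observation --- that the exponent $\theta=\frac{AR^*}{R^*-1}$ exceeds one, so $g(\pi)=1+f(\pi)-\pi/\beta$ is strictly convex and has at most two zeros --- is exactly what turns the picture into a theorem: it rules out a tangential double root at $\pi^*$, delivers existence of $\pi_L$ via $g(0^+)=1>0$ together with $g'(\pi^*)=(A-1)/\beta>0$, and pins down the sign $f'(\pi_L)<\beta^{-1}$ from the sign pattern of a strictly convex function at its smaller zero. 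All the algebra checks out: $f'(\pi^*)=AR^*/\pi^*=A/\beta$, $\theta-1=\frac{(A-1)R^*+1}{R^*-1}$, and hence $f'(\pi_L)=\frac{A}{\beta}(\pi_L/\pi^*)^{\theta-1}$ as in the statement. Your closing check that $g(\beta)>0$ forces $\pi_L>\beta$ and so $R_L>1$ is a useful addition the paper only asserts in passing (``non-negativity of nominal interest rate''), and it is needed for the money-demand relation \eqref{eq:md_ss} to be well defined at the low steady state. The only caveat, which concerns the statement rather than your argument, is that what you establish at $\pi_L$ is $f'(\pi_L)<\beta^{-1}$, which is the proposition's own formalisation of ``passive'' here, even though the body of the paper elsewhere defines passivity as $f'<1$; your proof matches the proposition as written.
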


Once inflation is determined, real money balances are given by \eqref{eq:md_ss}, and real debt is determined by \eqref{eq:fp_ss}. 

In the neighbourhood of either steady state, our model can be described by a linear approximation for $\pi_t$ and $b_t$ of the  form
\begin{align}
\begin{bmatrix}
\hat  \pi_t \\
\hat b_t
\end{bmatrix}=
\mathbf{B}
\begin{bmatrix}
\hat E_t \pi_{t+1} \\
\hat E_t b_{t+1}
\end{bmatrix}+
\mathbf{C}
\begin{bmatrix}
\hat  \varepsilon^R_t  \\
\hat  \varepsilon_{t}^\tau \\
\hat \varepsilon^y_t 
\end{bmatrix}.
\label{eq:linear_mdl}
\end{align}
Note that output is given by \eqref{eq:output}, which is exogenous, depending only on the technology shock. According to \cite{BK1980}, the solution to \eqref{eq:linear_mdl} is locally unique if and only if one eigenvalue is within the unit circle and the other eigenvalue is outside the unit circle. The two eigenvalues of the system \eqref{eq:linear_mdl} are given by $\frac{1}{\alpha\beta}$ and $\frac{1}{1/\beta-\gamma}$ (see the Appendix for derivation) \footnote{The eigenvalues are the inverses of the eigenvalues of the Blanchard-Kahn conditions. This formulation is common in the learning literature, with the expectations operator on the right-hand side.}.

When there is a non-stochastic steady state, it can be shown that stochastic steady states exist their neighbourhoods if the support of the exogenous shocks is sufficiently small. Furthermore, in this case the steady state is locally determinate, provided the corresponding linearisation is determinate. Throughout the paper we assume that the shocks are small in the sense of having small support. Determinacy needs to be assessed separately for the two steady states at $\pi^* $ and $\pi_L$. We have the following formal result:

\begin{proposition}
\text{[\cite{EH2007}]} 
In the linear system given by \eqref{eq:linear_mdl}, \\
(i) If fiscal policy is passive, $|\gamma -\beta^{-1}|<1$, the steady state $\pi^*$ is locally determinate and the steady state $\pi_L$ is locally indeterminate.\\
(ii) If fiscal policy is active, $|\gamma -\beta^{-1}|>1$, the steady state $\pi^*$ is locally explosive and the steady state $\pi_L$ is locally determinate.
\end{proposition}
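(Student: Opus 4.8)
The proof is a two‑by‑two Blanchard--Kahn bookkeeping exercise. The plan is to evaluate the two eigenvalues of $\mathbf B$ from \eqref{eq:linear_mdl}, $\lambda_1=\tfrac{1}{\alpha\beta}$ and $\lambda_2=\tfrac{1}{\beta^{-1}-\gamma}$, at each of the steady states $\pi^*$ and $\pi_L$; to observe that $\lambda_1$ (the ``inflation root'') lies inside the unit circle at $\pi^*$ and outside it at $\pi_L$, while $\lambda_2$ (the ``debt root'') lies inside or outside according to whether fiscal policy is active or passive; and then to translate the four resulting configurations of $(\lambda_1,\lambda_2)$ into determinacy, indeterminacy or explosiveness, using that the reduced system \eqref{eq:linear_mdl} has exactly one predetermined variable (real debt, which enters the government budget constraint \eqref{eq:gbc} with a one‑period lag) and one free, ``jump'' variable (inflation).

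The location of $\lambda_1$ comes straight from Proposition~1 (the result of \cite{BSGU2001b}). There $f'(\pi^*)=A/\beta$, so $\alpha\beta=A>1$ at the target (since $A>1$ in \eqref{eq:f}) and $|\lambda_1|=1/A<1$; and $f'(\pi_L)=\tfrac{A}{\beta}\big(\tfrac{\pi_L}{\pi^*}\big)^{\frac{(A-1)R^*+1}{R^*-1}}<\beta^{-1}$, so $\alpha\beta<1$ at the liquidity trap and $|\lambda_1|>1$. Hence this flip is exactly the Taylor‑principle switch $f'(\pi^*)>\beta^{-1}>f'(\pi_L)$: monetary policy is active at $\pi^*$ and passive at $\pi_L$. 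The location of $\lambda_2$ involves only fiscal policy: as $0\le\gamma\le\beta^{-1}$, the number $\beta^{-1}-\gamma$ is real with $|\lambda_2|=|\beta^{-1}-\gamma|^{-1}$, so passive fiscal policy $|\gamma-\beta^{-1}|<1$ is equivalent to $|\lambda_2|>1$ and active fiscal policy $|\gamma-\beta^{-1}|>1$ to $|\lambda_2|<1$.

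Putting the two together: under passive fiscal policy the pair $(\lambda_1,\lambda_2)$ has one root inside and one outside the unit circle at $\pi^*$, and both roots outside at $\pi_L$; under active fiscal policy it has both roots inside at $\pi^*$, and one inside and one outside at $\pi_L$. The ``one in, one out'' configurations are locally determinate by \cite{BK1980}, as recalled after \eqref{eq:linear_mdl} -- this gives the determinacy claims for $\pi^*$ in~(i) and for $\pi_L$ in~(ii). When both eigenvalues of $\mathbf B$ lie outside the unit circle (passive fiscal at $\pi_L$, the passive‑money/passive‑fiscal regime), there are too few stable roots for the single predetermined variable, so after diagonalising $\mathbf B$ one dimension of forecast‑error innovations is left free (a sunspot) and there is a continuum of stationary rational‑expectations solutions, making $\pi_L$ locally indeterminate. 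When both eigenvalues lie inside the unit circle (active fiscal at $\pi^*$, the active‑money/active‑fiscal regime), there are too many stable roots: the forward‑looking inflation block already pins the inflation path down completely, so the restriction needed to keep real debt from diverging cannot hold for generic shocks and generic inherited debt, making $\pi^*$ locally explosive.

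The one genuinely delicate step is the direction of these last two implications -- whether violating ``one in, one out'' produces indeterminacy or explosiveness -- which is decided solely by the count of predetermined variables, equal to one here. Once it is clearly established that real debt is the single predetermined state and inflation the single jump variable, ``both roots outside'' must mean a shortfall of boundary conditions (indeterminacy) and ``both roots inside'' a surplus (explosiveness). Everything else is substitution into the eigenvalue formulas; the explicit reduced form derived in the Appendix can be used to double‑check each of the four sub‑cases, but it is not needed for the argument.
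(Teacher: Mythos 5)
Your proof is correct and follows essentially the same route as the paper's: locate the two eigenvalues $1/(\alpha\beta)$ and $1/(\beta^{-1}-\gamma)$ of $\mathbf{B}$ at each steady state and apply the Blanchard--Kahn count with debt predetermined and inflation free --- the paper's own proof is just the one-line observation that $|\alpha\beta|>1$ at $\pi^*$ and $|\alpha\beta|<1$ at $\pi_L$, deferring the rest to \cite{EH2007}, so you have simply spelled out what it leaves implicit. One small wording caution: because \eqref{eq:linear_mdl} puts the expectations operator on the right-hand side, eigenvalues of $\mathbf{B}$ \emph{outside} the unit circle correspond to \emph{stable} Blanchard--Kahn roots, so the ``both outside'' configuration at $\pi_L$ under passive fiscal policy is a surplus (not a shortfall) of stable roots; your conclusion of indeterminacy there is nonetheless the right one.
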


\begin{proof}
Since $\alpha=f'(\pi)$, it is easy to verify that at the higher steady state $\pi^*$, $|\alpha\beta|>1$ and at the lower steady state $\pi_L$, $|\alpha\beta|<1$.\footnote{More details can be found in \cite{EH2007}, who prove that the linearisation yields a locally unique asymptotically stationary rational expectations equilibrium if monetary policy is (locally) active and fiscal policy is passive, or if monetary policy is (locally) passive and fiscal policy is active.}
\end{proof}

\section{Learning Approaches\label{sec:learning}}
In this section, we first review the adaptive learning approach, one of the main learning approaches used in economics, in light of our model. We then give a general introduction to the main concepts in (deep) reinforcement learning, and to the specific learning algorithm used in this paper. We translate this algorithm to our model, and derive state transition and learning protocols tailored to the model setting. Finally, we put both learning approaches in context using the concept of generalised policy iteration, which offers a unifying framework. 

\subsection{Adaptive Learning\label{sec:adaptive}}
Learning in (macro)economics is a way to deviate from the rational expectation hypothesis in principled ways while still adhering to general equilibrium models \citep{sargent1993bounded,EH2001book,evans2009learning,EP2018imperfect}. In this sense, learning approaches contribute the the study of the general notion of bounded rationality, which may include rational expectations as a special case.

One of the main approaches in the economic learning literature is {\it adapative learning}. Private agents, households in our case, make forecasts using a reduced form econometric model of the relevant variables, and estimate the parameters of this model in a self-referential system based on past data. In each period, the economy uses the agent forecast as input and attains a temporary equilibrium which provides a new data point for the next period's forecast. This sequence of temporary equilibria may generate parameter estimates that converge to a fixed point corresponding to a rational expectation equilibrium for the economy. In this case, the rational expectations equilibrium is stable under learning, or learnable.

\cite{EH2001book} have shown that there is a close connection between the possible convergence of least squares learning to a rational expectation equilibrium and a stability condition, known as {\it E-stability}, based on a mapping from a {\it perceived law of motion} (that private agents are estimating) to an {\it implied actual law of motion} generating the data under these perceptions. E-stability is defined in terms of the local stability at a rational expectations equilibrium of a differential equation based on this map.

If there are multiple rational expectations equilibria, the nature of the perceived law of motion used by the agents in forecasting, i.e. their econometric model, can  determine which equilibria are learnable. This may then serve as a selection process for the study of equilibria under learning. We focus on the case in which the exogenous shocks are i.i.d. processes. The rational expectation solutions of $\pi_t$ and $b_t$ are just noisy steady states, i.i.d. processes. The forecasts of $\pi^e_{t+1}$ and $b^e_{t+1}$ do not depend on the exogenous shocks. It is now natural for private agents to forecast by simply estimating the mean values of $\pi_t$ and $b_t$. It is called {\it steady state learning}. This simplifies our analysis without affecting the theoretical results.

Translating this to our model setting, agents treat \eqref{eq:linear_mdl} as the perceived law of motion, where they estimate the mean of each variable. We can express the expectations of the variables with the estimates of their means. This can be written as a simple recursive algorithms,
\begin{equation}
x^e_{t+1}=x^e_{t}+\phi_t(x_{t-1}- x^e_{t})\,,\label{eq:sslearning}
\end{equation}
with $x\in\{\pi,b\}$. The superscript $x^e_t$ refers to the agent's expected quantity for time $t$. $\phi_t$ is the gain sequence. Under least-squares learning it is usually taken to be $\phi_t=\frac{1}{t}$, often termed a ``decreasing-gain'' sequence, where the influence of new observations decreases over time.\footnote{An alternative is a constant gain, which can have the advantage of reacting better to a changing environment, it will also result more noisy learning and different stability criteria. See \cite{EH2001book}, among others, for details.}

We return to the nonlinear model for the behavioural rules of the agent, such that we can examine the global dynamics of the system. We replace rational expectations with point expectations in the model equations (\ref{eq:fisher}) and (\ref{eq:gbc}), leading to the corresponding nonlinear dynamic system\footnote{The full rational expectations problem can be written as $\mathbb{E}\big[F(y_{t+1},y_t,y_{t-1},\varepsilon_t)\big]=0$ for endogenous state variables $y_t$ and innovations $\varepsilon_t$. Here we make the assumption of point expectations, e.g. replacing $E_t c_{t+1} \pi_{t+1}^\sigma$ with $c^e_{t+1} (\pi^e_{t+1})^\sigma$. For stochastic shocks with small bounded support this is a reasonable approximation.} $F^e$

\begin{eqnarray}
c_t\,&=&\,c^e_{t+1} ({\frac{\pi^e_{t+1}}{\beta R_t}})^\sigma \label{eq:fisher_l}\\
%
%&m_t\,&=&\,\chi^\sigma c_t \big(\frac{R_t-1}{R_t}\big)^{-1/\sigma} \label{eq:md_l}\\
%
c_t^{\sigma+\frac{\eta+\varphi}{1-\eta}}\,&=&\,(1-\eta)(\varepsilon^y_t)^{\frac{1+\varphi}{1-\eta}} \label{eq:y_l}\\
\chi^\sigma c^e_{t+1}\big[\frac{ \bar \theta f(\pi^e_{t+1})+1}{\bar \theta f(\pi^e_{t+1})}\big]^{1/\sigma}\,&+&\,b^e_{t+1}+\gamma_0+\gamma b_{t}+\varepsilon_{t+1}^\tau \nonumber \\
\,&=&\,\chi^\sigma \frac{c_{t}}{\pi^e_{t+1}}\big(\frac{R_t-1}{R_t}\big)^{-1/\sigma}+R_{t} \frac{b_{t}}{\pi^e_{t+1}} \label{eq:gbc_l}\\
R_t-1\,&=&\,\varepsilon^R_t f(\pi_t)\label{eq:mp_l}
\end{eqnarray}

The dynamics for $\pi_t$ and $b_t$ under learning is then given by equations \eqref{eq:sslearning}--\eqref{eq:mp_l}. According to \cite{EH2001book}, the local asymptotic stability of the ordinary differential equation
\begin{equation}
\frac{dx^e}{du}=F_x^e(\pi^e,b^e)-x^e\,,
\label{eq:estable}
\end{equation}
again with $x\in\{\pi,b\}$, provides the relevant E-stability criterion for the stochastic model, under steady state learning, when the shocks are small. Here, $u$ denotes notional time, and $F^e(\cdot)$ is the mapping from the perceived law of motion to the corresponding actual law of motion. E-stability is determined by the Jacobian matrix of $F^e(\cdot)$ at the steady state. This is approximated by the matrix $\mathbf{B}$ of \eqref{eq:linear_mdl} evaluated at the steady state. The E-stability conditions are that one eigenvalue of $|\mathbf{B}-I|$ have real part less than zero and the other eigenvalue bigger than zero. The formal learning results are summarised by the following.

\begin{proposition}
\label{prop:3}
Under steady state learning, if the support of shocks are sufficiently small, we have
(i) If fiscal policy is passive, $|\gamma -\beta^{-1}|<1$, the steady state $\pi^*$ is locally stable under learning and the steady state $\pi_L$ is not locally stable under learning.\\
(ii) If fiscal policy is active, $|\gamma -\beta^{-1}|>1$, the steady state $\pi^*$ is not locally stable under learning and the steady state $\pi_L$ is locally stable under learning.
\end{proposition}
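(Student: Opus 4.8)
The plan is to reduce Proposition~\ref{prop:3} to a sign computation on the two eigenvalues of the Jacobian $\mathbf{B}$ of the map $F^e$ at each steady state, in close parallel with the proof of Proposition~2 but replacing the Blanchard--Kahn determinacy criterion by the E-stability criterion stated just above the proposition. Concretely, I would first linearise the point-expectations system \eqref{eq:fisher_l}--\eqref{eq:mp_l} around a deterministic steady state, expressing the realised pair $(\hat\pi_t,\hat b_t)$ in terms of the forecasts $(\hat E_t\pi_{t+1},\hat E_t b_{t+1})$; this is exactly the matrix $\mathbf{B}$ of \eqref{eq:linear_mdl}, and since $\mathbf{B}$ is the derivative of the perceived-to-actual law of motion map, $\mathbf{B}-I$ is the matrix governing the local asymptotic stability of \eqref{eq:estable}. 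The substantive step, carried out in the Appendix, is to observe that output and consumption are exogenous by \eqref{eq:y_l}, so that the Fisher equation \eqref{eq:fisher_l} together with the monetary rule \eqref{eq:mp_l} pins $\hat\pi_t$ down as a function of $\hat E_t\pi_{t+1}$ alone (with slope $1/(\alpha\beta)$ after using $R=\pi/\beta$ in steady state), while, once $R_t$, $m_t$ and $c_t$ are eliminated from the government budget constraint \eqref{eq:gbc_l}, the debt equation contributes a diagonal entry $1/(\beta^{-1}-\gamma)$. Hence $\mathbf{B}$ is triangular in these coordinates, with eigenvalues $\lambda_1=(\alpha\beta)^{-1}$ and $\lambda_2=(\beta^{-1}-\gamma)^{-1}$, so $\mathbf{B}-I$ has real eigenvalues $\mu_1=(1-\alpha\beta)/(\alpha\beta)$ and $\mu_2=(1-\beta^{-1}+\gamma)/(\beta^{-1}-\gamma)$.

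Next I would evaluate $\mu_1$ and $\mu_2$ at the two steady states. For $\mu_1$: Proposition~1 gives $f'(\pi^*)=A/\beta$ with $A>1$ and $f'(\pi_L)<\beta^{-1}$, so $\alpha\beta=A>1$ at $\pi^*$ (hence $\mu_1<0$) and $\alpha\beta<1$ at $\pi_L$ (hence $\mu_1>0$) --- the same one-line observation used in the proof of Proposition~2. For $\mu_2$: the standing restriction $0\le\gamma\le\beta^{-1}$ gives $\beta^{-1}-\gamma\ge0$, and away from the knife-edge $\gamma=\beta^{-1}$ one has the elementary equivalences $|\gamma-\beta^{-1}|<1\iff 0<\beta^{-1}-\gamma<1\iff\lambda_2>1\iff\mu_2>0$ (passive fiscal policy) and $|\gamma-\beta^{-1}|>1\iff\beta^{-1}-\gamma>1\iff 0<\lambda_2<1\iff\mu_2<0$ (active fiscal policy); note $\mu_2$ depends only on the policy parameter $\gamma$, not on which steady state is considered.

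Combining the four resulting sign patterns against the E-stability criterion --- one eigenvalue of $\mathbf{B}-I$ with negative real part and the other with positive real part --- then yields all four claims at once. Under passive fiscal policy: at $\pi^*$, $\mu_1<0$ and $\mu_2>0$, so $\pi^*$ is locally stable under learning, whereas at $\pi_L$, $\mu_1>0$ and $\mu_2>0$, so $\pi_L$ is not. Under active fiscal policy: at $\pi^*$, $\mu_1<0$ and $\mu_2<0$, so $\pi^*$ is not stable under learning, whereas at $\pi_L$, $\mu_1>0$ and $\mu_2<0$, so $\pi_L$ is. Finally, to obtain the statement in the form asserted --- for the stochastic model under steady state learning --- I would invoke the fact recorded in the text that when the exogenous shocks have sufficiently small support the stochastic steady states inherit the E-stability properties of the corresponding deterministic linearisation, so the deterministic computation transfers verbatim.

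I expect the main obstacle to be the derivation of $\mathbf{B}$ in clean closed form: verifying that the inflation block and the debt block of \eqref{eq:fisher_l}--\eqref{eq:mp_l} genuinely decouple into triangular form with the stated diagonal entries requires eliminating $R_t$, $m_t$ and $c_t$ from the government budget constraint \eqref{eq:gbc_l} via \eqref{eq:fisher_l}, \eqref{eq:y_l} and \eqref{eq:mp_l} before the relevant partial derivatives can be read off, and tracking how the inflation elasticity of the interest-rate rule enters only through the product $\alpha\beta$. A secondary but conceptually important point is to justify that, because debt is a predetermined variable while inflation is free, the relevant E-stability condition is the stated ``one negative, one positive eigenvalue of $\mathbf{B}-I$'' condition rather than the ``both eigenvalues negative'' condition appropriate to a purely forward-looking system; this is standard for steady state learning with a lagged endogenous variable and can be imported from \cite{EH2001book} and \cite{EH2007}, as the statement already indicates.
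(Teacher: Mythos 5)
Your proposal is correct and follows essentially the same route as the paper: it reads off the eigenvalues of $\mathbf{B}-I$ as $\frac{1}{\beta f'(\pi)}-1$ and $\frac{1}{1/\beta-\gamma}-1$ (with $\mathbf{B}$ derived as in the Appendix), determines their signs at $\pi^*$ and $\pi_L$ under passive versus active fiscal policy, and applies the stated one-negative-one-positive E-stability criterion. The only difference is that you spell out the derivation of $\mathbf{B}$ and the justification of the criterion, which the paper delegates to the Appendix and to \cite{EH2001book}/\cite{EH2007}.
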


\begin{proof}
The eigenvalues of $|\mathbf{B}-I|$ are $ev_1=\frac{1}{\beta f'(\pi)}-1$ and $ev_2=\frac{1}{1/\beta-\gamma}-1$. Since $ f'(\pi^*)>\frac{1}{\beta}$ and $f'(\pi_L)<\frac{1}{\beta} $, therefore we have $ev_1(\pi^*)<0$ and $ev_1(\pi_L)>0$. When fiscal policy is passive, $ev_2>0$ and when fiscal policy is active, $ev_2<0$. 
\end{proof}

\subsection{(Deep) reinforcement Learning}

The problem of maximising the long-run reward of an agent within a modelling environment has been studied extensively in the field of reinforcement learning. The idea of reinforcement learning is to learn behavioural rules, or policies, that, depending on state observations, lead to agent actions that maximise the expected reward. Instead of relying on an analysis of the economic model, which requires extensive knowledge of the model on the part of the agents, reinforcement learning has the promise of finding model solutions with minimal, but flexible, requirements on agents' knowledge. In this section, we give a brief introduction to reinforcement learning and connect it to our model setting from Section \ref{sec:model}. This connection has a number of features that are specific to expectational models used in economics and finance, which can be readily transferred to other model settings. A comprehensive introduction to reinforcement learning is given in \cite{Sutton1998RL}.

\subsubsection{The (deep) reinforcement learning problem}

An agent in a reinforcement learning setting aims to maximise its expected cumulative lifetime reward, or the {\it expected return}, that is
\begin{equation}\label{eq:reward}
\max_{\mathcal{P}} \mathbb{E}_t[G_t]\quad\text{with}\quad G_t\equiv\sum_{k=0}^{\infty} \beta^k r_{t+1+k}(s)\,,
\end{equation}
with $\beta\in (0,1]$ a discount factor and the state-dependent reward $r_t(s) = r(s=s_t)\in \mathbb{R}$ and $s\in\mathcal{S}\subset\mathbb{R}^{n_s}$, $n_s$ being the dimension of the state space. 
The agent achieves maximisation of (\ref{eq:reward}) by optimising its behavioural rules, or policy, $\mathcal{P}: s_t \rightarrow a_t\in\mathcal{A}\subset\mathbb{R}^{n_a}$ ($n_a$ being the dimension of the action space) based on observed state transitions.\footnote{The policy is usually denoted $\pi$ in the reinforcement learning literature, with $\pi^*$ denoting the optimal policy. However, these two expressions are reserved for inflation and target inflation in macroeconomics, such that we denote agent (optimal) policies by $\mathcal{P}^{(*)}$.}. These actions interact with the environment the agent is living in leading to the next state and returning a reward, i.e. $\mathcal{E}: (s_t, a_t) \rightarrow (s_{t+1},r_{t})$. This process is schematically shown in Figure \ref{fig:rl}. At each time step $t$, the agent observes the state $s_t$, takes action $a_t$, while the environment returns reward $r_{t+1}$ and a new state  $s_{t+1}$ to the agent, which acts again, and so on.
\begin{figure}[tbp] \centering%
\includegraphics[width=0.8\textwidth]{./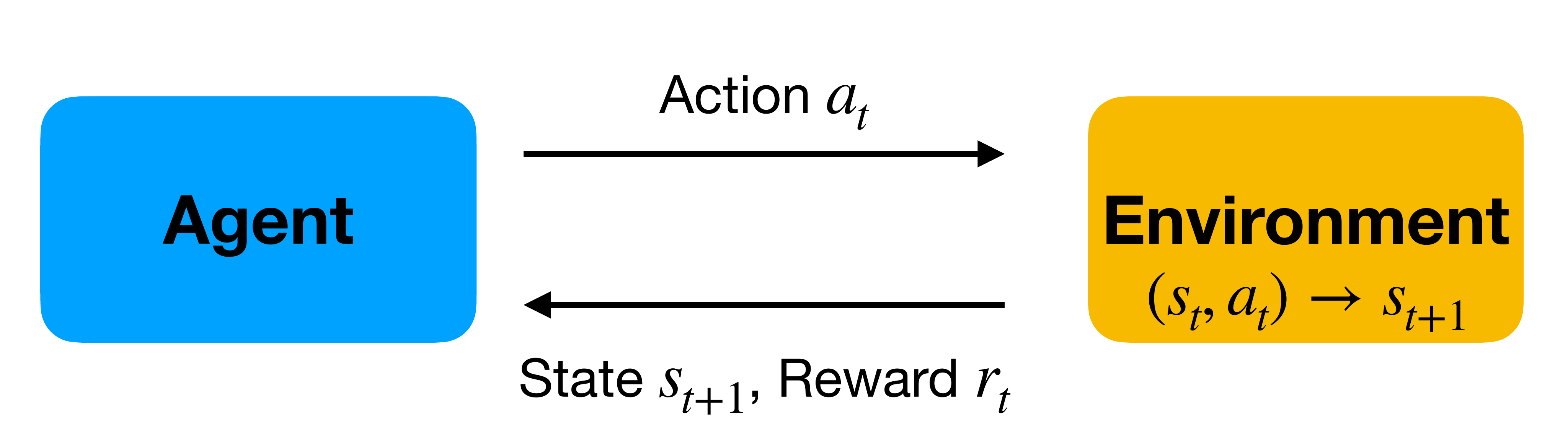} 
\caption{Agent-environment interaction in reinforcement learning.} \label{fig:rl} 
\end{figure}%
This can be formulated as a Markov decision process defined by the tupel ($\mathcal{S}$, $\mathcal{A}$, $\mathcal{T}$, $r$). The transition probability $\mathcal{T}:\mathcal{S}\,\times\,\mathcal{A}\times\,\mathcal{S}\rightarrow [0,1]$ describes the probability of the next state $Pr(s_{t+1}|s_t,a_t)=F(s_{t+1}|s_t,a_t)$ given the current state $s_t$ and action $a_t$, where $F(\cdot)$ describes the model environment. This transition function fulfils the Markov property, i.e. it only depends on the current state and action, but not the history of state transitions.

Finding the optimal policy $\mathcal{P}^*$ can be approached from the {\it state-value function} following a policy $\mathcal{P}$
\begin{eqnarray}
V_{\mathcal{P}}(s)\,&=&\,\mathbb{E}_{\mathcal{P}}\big[G_t|s=s_t\big] \label{eq:state_values}\\
\,&=&\, \max_{a\in\mathcal{A}} \mathbb{E}_{\mathcal{P}}\big[ G_t | s=s_t,a=a_t\big] \nonumber\\
\,&=&\, \max_{a\in\mathcal{A}} Q(s,a)\,, \label{eq:q_value}
\end{eqnarray} 
where the last expression defines the {\it action-value function}, i.e. the expected return following a behavioural rule $\mathcal{P}$ given a state and action. The optimal policy $\mathcal{P}^*$ maximises both state and state-actions values, which also maximises expected return - our final goal.

The state-action value function fulfils the recursive  {\it Bellman equation}
\begin{equation}
\label{eq:bellman}
Q(s_t,a_t) = r(s_t) + \beta \mathbb{E}_{\mathcal{P}} \big[ Q(s_{t+1},a_{t+1})\big].
\end{equation}
The current state-action value is the current reward plus the expected value of the next state. These components form the backbone of so-called actor-critic approaches, where one aims to improve the actor $a_t = \mathcal{P}(s_t)$ which is evaluated by the critic $Q(s_t,a_t)$. That is, we will be estimating two separate quantities, $\mathcal{P}$ and $Q$, where the one can be used to evaluate the other. These can be parameterised using general function approximators in the form of artificial neural networks with internal weights $\phi$ and $\theta$, denoted by $\mathcal{P}_{\phi}$ and $Q_{\theta}$, respectively. The use of deep artificial neural networks \citep{goodfellow2016deep} is more generally at the forefront of advances in solving complex tasks like computer vision, speech recognition or robotic navigation. {\it Deep reinforcement learning}, as used in this analysis, combines the traditional reinforcement learning approach to dynamic problems and the use of deep artificial neural networks. The general function approximator properties of the latter dramatically increase the capabilities of the former making possible recent advances, including the analysis in our study.  
%
%The agent's policy rule $a_t = \pi(s_t|\phi)$ can be parameterised by an artificial neural network with weights $\phi$. Including a mechanism for some randomness in policy decisions for action exploration, we can sample, i.e. off-policy, transitions $\mathcal{T}: (s_t,a_t)\rightarrow s_{t+1}$ according to (*). The value function can be parameterised as the critic neural network $Q(s_t,a_t|\theta)$ with weights $\theta$. 

Now, using sampled state transitions as observations, i.e. interactions of the agent and the environment, and standard optimisation techniques like stochastic gradient descent, the policy and action-value function networks can be trained by iteratively minimising the Bellman residuum, 
\begin{eqnarray}
L(\phi,\theta)&\,=\,&\mathbb{E}_{s_t,a_t,r_t}\bigg[\frac{1}{2}\big(Q_{\theta}(s_t,a_t)- \hat{Q}_{\theta}(s_t, a_t)  \big)^2\bigg] \label{eq:bell_resid}\,,\\
\text{with}\quad \hat{Q}_{\theta}(s_t, a_t) &\,=\,& r_t(a_t,s_t) + \beta\, \mathbb{E}_{\mathcal{P}}\big[ Q_{\theta}\big(s_{t+1},\mathcal{P}_{\phi}(s_{t+1})\big)\big]\,.
\end{eqnarray}

The details of how to calculate the needed gradients vary by algorithm, while the development of efficient and stable approaches to solve the above problem in different settings is the subject of ongoing research in artificial intelligence. In this paper we  use the soft actor-critic approach of \cite{Haarnoja2018sac}.\footnote{Maximum entropy approaches like this have the advantage that they produces relatively stable learning outcomes compared to, for example policy-gradient approaches like \cite{lillicrap2019continuous}. Note, however, that we are less concerned in the precise learning algorithm used rather then in its ability to solve the agent's optimisation problem. The code we used for optimisation is available at \url{https://github.com/pranz24/pytorch-soft-actor-critic}.} We have now defined the general setting in deep reinforcement learning. We next relate this setting to our model environment.

\subsubsection{Deep reinforcement learning in the context of expectational macroeconomics}

The household's problem (\ref{eq:value}) is analogous in structure to the learning agent's problem (\ref{eq:reward}) when replacing the general reward with the household's utility (\ref{eq:utility}).

The environment $\mathcal{E}$, about which the agent is ignorant, is given by the production process (\ref{eq:production}), goods pricing (\ref{eq:ld}),\footnote{The prices of goods are assumed to be set optimally by the firm. This could be relaxed by extending the current setting to a multi-agent problem, where the firm would have to learn about its pricing strategy. Multi-agent learning problems are considerably more complex than single-agent problems and we leave this setting for future work.} the government budget constraint (\ref{eq:gbc}), market clearing (\ref{eq:gmc}), fiscal policy (\ref{eq:fp}) and monetary policy (\ref{eq:gtaylor}). It does not include the optimality conditions \eqref{eq:fisher}--\eqref{eq:ls1}, which the household has to learn about using deep reinforcement learning.

We can make a connection here with dynamic programming and value iteration methods which  have been used in the economics literature. The difference to the current setting is that these approaches assume that the agent possesses knowledge about the model probabilities $F(s_{t+1}|s_t,a_t)$. In deep reinforcement learning this knowledge is mostly absent or intractable. It is part of the agent's learning task is to infer the probabilities from observations that are generated through interactions with the environment.

Now, the state at time $t$, $s_t$, is given by last period's money, bond holdings, inflation, consumption, and hours worked, as well as the exogenous components of this period's fiscal policy, monetary policy, and technology shocks:
\begin{equation}
\label{eq:state}
s_t\,=\,\left( m_{t-1}, b_{t-1}, \pi_{t-1}, c_{t-1}, n_{t-1}, \epsilon_{t}^{\tau},\epsilon_{t}^{R}, \epsilon_{t}^{y} \right)\,.
\end{equation}
The  state representation is not unique,\footnote{Inflation can be replaced by the gross interest rate set by the monetary authority according to (\ref{eq:gtaylor}).} but it does need to fulfil the Markov property of allowing for state transitions only based on the knowledge of the current state and not past states.

 The household's actions $a_t$ at each time step $t$ are a tuple of consumption, bond saving and hours worked, denoted by
\begin{equation}
\label{eq:actions}
a_t\,=\,\left( c_t^{act}, b_t^{act}, n_t \right)\,, 
\end{equation}
where $x^{act}_t$, $x\in\{c,b\}$, represents actions with reference to last period's price level, i.e. $X_t/P_{t-1}$ with $X_t$ being nominal consumption or bond holdings. The state variables in \eqref{eq:state} are now determined by the interactions of the household's actions \eqref{eq:actions} and the model environment. These actions set the level of inflation, real consumption and real bond holdings according to this period's prices, i.e. 
\begin{eqnarray}
\pi_t &\,=\,& c_t^{act}/y_t\,,\label{eq:rl_inflation}\\
c_t   &\,=\,& c_t^{act}/\pi_t\,, \label{eq:rl_cons}\\
b_t   &\,=\,& b_t^{act}/\pi_t\,.\label{eq:rl_bonds}
\end{eqnarray}

The first relation stands for prices clearing markets (\ref{eq:gmc}) where the relationship between the agent's actions (here, choosing consumption with reference to last period's price level) and the price adjustment process between different periods has been made explicit. This mechanism respects the information flow in the model, with price inflation relating real quantities between periods. The agent observes the state with reference to past period's price level $P_{t-1}$ and takes its actions accordingly. This period's price level $P_{t}$ is then set via the market clearing condition (\ref{eq:rl_inflation}), determining this period's inflation $\pi_t$. This mechanism incorporates the dynamics of non-stationary nominal quantities within a stationary real setting, which is particular to  economics settings. That considerably simplifies the learning process, because learning a non-stationary environments is considerably more challenging.

Learning happens in episodes. Each episode is initiated with a random state drawn uniformly from a region of interest in the state space. Subsequently, iterations between agent actions and the environment result in state transitions. The agent's parameters, $Q_{\theta}$ and $\mathcal{P}_{\phi}$, are updated between such steps. An episode ends when a termination criterion is reached. This is in our case, either a maximal number of steps $N_{epi}^{max}$ or an improvement in agent utility below a fixed small threshold $d_u^{min}$. The rationale for using an episodic termination criterion is that this allows the agent to experience more regions of the state space during learning, and this this is coupled to learning progress in the case of $d_u^{min}$.\footnote{The learning problem can alternatively be formulated without the termination criterion on $d_u^{min}$ as the learning task is open-ended.}

Training/learning: Parameter updates and explorative actions characterise the training or learning of the agent. Explorative actions are actions which are not optimal according to the currently learned behavioural policy, but have noise components to them. This is a crucial part in deep reinforcement learning, as it allows the agent to discover new and ultimately better actions. The magnitude of the random component in actions characterises the exploration-exploitation trade-off in deep reinforcement learning. Set too small or large, the agent will fail to effectively learn.

Our action space is continuous, and exploration is achieved by drawing from a normal distribution generated from $\mathcal{P}_{\phi}(s_t)$.\footnote{We follow \cite{Haarnoja2018sac} which take the action $\mathcal{P}_{\phi}(s_t)$ for the mean and logarithm of the standard deviation of that normal distribution projected onto the maximally allowable action space.} That is, most actions will be close to the current best action, the mean of this action distribution, while deviations from this mean explore the action space. If such actions turn out to be beneficial, i.e. return higher utility to the household given the current state, the action network  $\mathcal{P}_{\phi}$ will move into this direction during learning.
 
For updates of the parameter in $(\mathcal{P}_{\phi},Q_{\theta})$, the agent draws randomly from a fixed-size memory of experience consisting of $N_{mem}$ past state transitions, e.g. to perform stochastic gradient descent.\footnote{The oldest transition drops out if the memory is full.} The overall training phase is set to last for a maximal number of steps $N_{train}$, i.e. the number of parameter updates by which we expect the agent to solve its optimisation problem. In the current setting, we consider the agent's problems as solved if the household finds action values corresponding to one of the steady states of our model, and if episodes terminate in such a state.

Testing: The evaluation of learning goals, such as the distance to a steady state in the state space, happens during testing. Testing consists of a set number of test episodes $N_{test}$ that the agent runs through between a fixed number of training steps $N_{interval}\ll N_{train}$.\footnote{This is likely to happen within an unfinished training episode, which is paused at this point to be resumed after testing. The motivation to use a fixed number of training steps between test episodes instead of a fixed number of training episodes is that the length of an episodes during training or testing is stochastic and may also change during learning, such that a number of steps allows us to measure the agent's learning progress uniformly.}

 The full training-cum-test setting is summarised in Algorithm\ \ref{algo:train_test}. The initial $N_{burn}$ steps of pure random actions serve as a burn-in phase to give the agent some orientation before exploration starts. We save all test transitions, as well as the agent parameters $(\mathcal{P}_{\phi},Q_{\theta})$ at different stages of learning. This allows ex-post experimentation, the reproduction of test results, or the the flexible adjustment of the learning setting.
 
 \begin{algorithm}
    \caption{Training and testing protocol of household agent}\label{algo:train_test}
    Initialise: Environment $\mathcal{E}$ (parameterised model), agent (parameterised by $\mathcal{P}_{\phi}$, $Q_{\theta}$)
    \begin{algorithmic}
        \FOR{steps = $1$ to $N_{train}$}
        	\STATE initialise training episode with random state $s_t$
            \WHILE{training episode is not done}
            	\IF{$\text{steps}\leq N_{burn}$}
            		\STATE Take allowed random action $a_t$
            	\ELSE
                	\STATE Draw exploration action $a_t=\mathcal{P}_{\phi}^{exp}(s_t)$
                \ENDIF
                \STATE Environment returns $(r_{t}, s_{t+1})=\mathcal{E}(s_t, a_t)$
                \STATE Add transition $(s_t,a_t,r_{t},s_{t+1})$ to memory
                \STATE Update $\mathcal{P}_{\phi}$, $Q_{\theta}$ using batch gradient descent from memory
                \IF{$mod(\text{steps},N_{interval})\,=\,0$}
                	\FOR{test episode = $1$ to $N_{test}$}
                		\STATE Record state transitions (*)
                	\ENDFOR
                	\STATE Save current agent ($\mathcal{P}_{\phi}^{steps}$, $Q_{\theta}^{steps}$)
                \ENDIF
                \STATE State update $s_t \leftarrow s_{t+1}$
                \STATE Test episode termination criteria ($N_{epi}^{max}$, $d_u^{min}$)
            \ENDWHILE
        \ENDFOR
    \STATE Save final agent ($\mathcal{P}_{\phi}^{final}$, $Q_{\theta}^{final}$)
    \end{algorithmic}
%\caption{Reinforcement learning algorithm for training and testing used by household agent.}
\end{algorithm}

 We still need to define the state transition of a single (testing) step (* in Algorithm \ref{algo:train_test}) which also includes the economics of the household's learning problem. 

{\it Step sequence (*) for single transition:} $s_t\rightarrow s_{t+1}$
\begin{enumerate}
\item Observe state $s_t$
\item Take actions $\mathcal{P}_{\phi}(s_t)=a_t=\left( b_t^{act}, c_t^{act}, n_t \right)$ [$\mathcal{P}_{\phi}^{exp}$ is used during training for exploration]
\item Production $y_t$ takes place according to (\ref{eq:production}) and firm sets wages using (\ref{eq:ld})
\item Markets clear: Inflation $\pi_t$ is set by (\ref{eq:rl_inflation}) 
\item This determines real consumption $c_t$ and real bond holdings $b_t$ according to (\ref{eq:rl_cons})-(\ref{eq:rl_bonds})
\item Policy realisations:
	\begin{itemize}
	\item The monetary authority sets the current gross interest rate $R_t$ based on $\pi_t$ via the Taylor rule (\ref{eq:gtaylor})
	\item The government raises taxes $\tau_t$ (\ref{eq:fp})
	\end{itemize}
\item The money holdings $m_t$ are realised from the GBC (\ref{eq:gbc})
\item Agent obtains reward $r_t=U(c_t,m_t,n_t)$
\item Next periods shocks are realised, $(\epsilon_{t+1}^{\tau},\epsilon_{t+1}^{R}, \epsilon_{t+1}^{y})$
\item State update $s_t\leftarrow s_{t+1} = \left( m_{t}, b_{t}, \pi_{t}, c_t, n_t, \epsilon_{t+1}^{\tau},\epsilon_{t+1}^{R}, \epsilon_{t+1}^{y} \right)$
\end{enumerate}

Looking at the learning algorithm and state transition, the state representation (\ref{eq:state}) can be roughly partitioned into three groups of variables, those used for state transitions $(b_{t-1}, \pi_{t-1})$, the evaluation of convergence $(c_{t-1}, n_{t-1}, m_{t-1})$, and observed shocks $(\epsilon_{t}^{\tau},\epsilon_{t}^{R}, \epsilon_{t}^{y})$. The first and the last group are needed for non-terminal state transitions. The second group, together with last period's money holdings $m_{t-1}$, is used to evaluate the termination criterion if an episode will be terminated and a new one initiated if $d_u\equiv|U_t-U_{t-1}|<d^{min}_u$.\footnote{The termination criterion based on the maximal number of steps within an episode $N_{epi}^{max}$ is independent of the state.} The higher the threshold $d^{min}_u$ is, the more episodes the agent will go through during training, thus exploring more of the state space, with a smaller chance of getting stuck during learning. However, too high a value will lead to imprecision as episodes terminate before the agent reached sufficient precision in its actions potentially ending up far from a steady state value.

\subsubsection{Generalised Policy Iteration}
\label{sec:gpi}

Our two learning approaches, adaptive learning and deep reinforcement learning, can be conceptually compared from the point of view of {\it generalised policy iteration} (GPI). In GPI, policy evaluation, which delivers the value of a state given a policy, and policy improvement, which delivers the change of behaviour to attain higher returns, interact iteratively \citep{Sutton1998RL}. This is depicted in Figure\ \ref{fig:gpi}. Under the (non-trivial) assumption that learning converges, this process results in a fixed point of optimal policy $\mathcal{P}^{*}(s)$ and maximal-return state values $V_{\mathcal{P}^{*}}(s)$. Both together specify the rational expectations equilibrium in our model. As long as the agent has not converged to this point, it is called {\it boundedly rational}.

\begin{figure}[tbp] \centering%
\includegraphics[width=0.7\textwidth]{./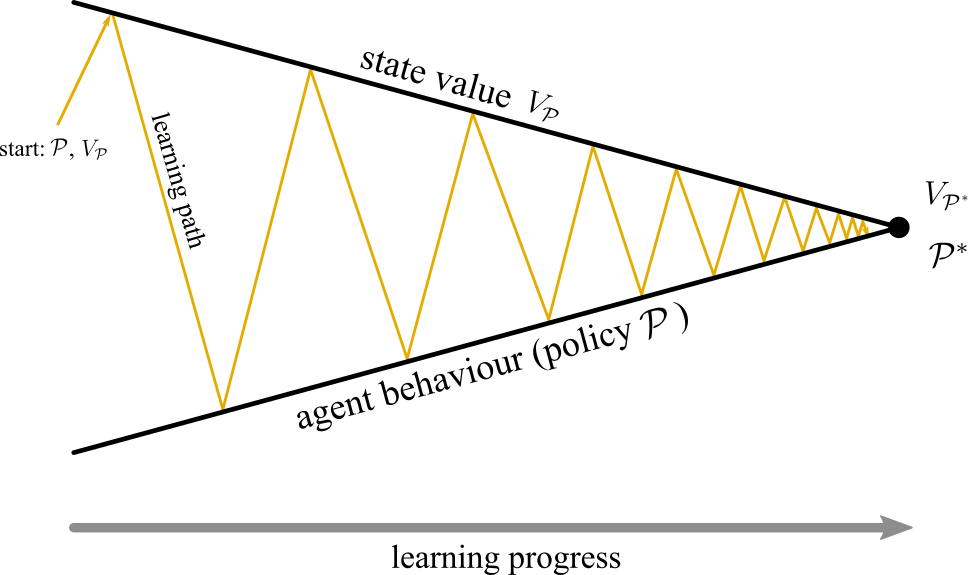} 
\caption{Schematic representation of generalised policy iteration.} 
\label{fig:gpi} 
\end{figure}%

In our modelling environment, optimal behaviour is given by the household first-order conditions (FOC) (\ref{eq:euler})--(\ref{eq:ls}), while (locally) optimal state values are the two steady states corresponding to $\pi^*$ and $\pi_L$.  Adaptive learning fixes the agent's behaviour by using these FOC within the system equations (\ref{eq:fisher_l})--(\ref{eq:gbc_l}). This can be thought of as a horizontal policy line in Figure\ \ref{fig:gpi}. The state equations (\ref{eq:sslearning}) describe state values, and converge according to Proposition\ \ref{prop:3}. The difference between adaptive learning and deep reinforcement learning is that the latter does not fix behaviour, but both the agent policy $\mathcal{P}$ and the corresponding state-action values $Q$ are learned simultaneously. The GPI framework will be useful to understand and quantify learning behaviour.\footnote{GPI can also be used describe {\it Euler learning} in macroeconomics where the behavioural rules differ from the FOC (see e.g. \cite{EP2018imperfect} for a brief discussion). In a nutshell, it may be less desirable from the GPI perspective as convergence cannot happen if the behavioural rules are not flexible enough to converge to the FOC.}

\section{Learning results}
\label{sec:results}

We analyse household learning within the deep reinforcement learning framework and the model presented in Section (\ref{sec:model}). The main parameterisation used in our analysis is given in Table\ \ref{tab:para}. The motivation for these choices is a trade-off between clarity of presentation and realism within our arguably simple model.

Table \ref{tab:para} reports the calibration. We take the model's frequency to be quarterly\footnote{The process of informational updates between a perceived old price level, actions and adjustment (\ref{eq:actions})-(\ref{eq:rl_bonds}) may be more realistic on a quarterly frequency compared to an annual one. On the other hand, public inattention to or unawareness of macroeconomic aggregates may also support an annual approach \citep{SIMS2010inattention}}and set $\beta=0.99$ which implies a steady state real interest rate of about 4 percent; $\varphi=1$, implying a unitary Frisch elasticity of labor supply; $\sigma=3$, which is within the range of 1 to 3.5 in the literature; $\chi=0.1$, following \cite{EH2005}. The Taylor rule coefficient $A =1.3$ gives two steady states of inflation,  one is $\pi^*=1.01$ (4\% net per annum), the other one is $\pi_L=1.0014$ (liquidity trap). 

\begin{table}[!ht]
\centering
\begin{tabular}{cll}
\toprule
parameter & value & description \\
\hline
$\beta$   			& 0.9900  & discount factor \\
$\sigma$ 			& 3.0000  & inverse of intertemporal elasticity of consumption and money holdings \\
%$\eta$    			& 0.0000  & production scaling factor \\
$\eta$    			& 0.001  & production scaling factor \\
$\varphi$    		& 1.0000  & inverse of Frisch elasticity of labor supply\\
$\chi$     			& 0.1000  & relative preference weight of money holdings \\
$\gamma_P$ 			& 0.0200  & passive fiscal policy (PFP) coefficient \\
$\gamma_A$ 			& 0.0000  & active fiscal policy (AFP)  coefficient \\
$A$		  			& 1.3000  & Taylor rule coefficient\\
$\pi^*$	  			& 1.0100  & target gross high-inflation rate (4\% net per annum)\\
$\pi_L$	  			& 1.0014  & implied gross low-inflation steady state (see Figure\ \ref{fig:ss})\\
$\epsilon^{\tau}_t$ & 0.0005  & monetary policy shock (std. dev.)\\
$\epsilon^{R}_t$    & 0.0005  & fiscal policy shock (std. dev.)\\
$\epsilon^{y}_t$	& 0.0005  & technology shock (std. dev.)\\
\bottomrule
  \end{tabular}
\caption{Baseline model parameterisation. The shock series $\epsilon^{\tau}_t$, $\epsilon^{R}_t$, $\epsilon^{y}_t$  follow log-normal, normal and normal distributions, with means of one, zero and one, respectively.}
\label{tab:para}
\end{table}

Steady state values for high and low inflation, as well as for passive and active policy, are given in Table \ref{table:cali_ss}. For better comparability of regimes, the fiscal policy intercept $\gamma_0$ is calibrated such that bond holdings equal annualised output for each policy combination. This does not affect the local stability properties of the model or the learning dynamics of the agent. Steady state money holdings are between 40-50\% of output which is about double the amount of the long-term average of narrow money holdings (M1 in the US) and half the amount of broad money (M3 in the US). While strictly speaking our model only talks to narrow money, extension with a financial sector would represent broader aggregates. Annualised net target inflation of $4\%$ is well above the mandate or recent experience in most advanced economies but well within those of emerging markets. Note that money holdings and household utility are generally higher in the low-inflation steady state $\pi_L$ with passive monetary policy.% lending itself to the interpretation of a liquidity trap.\\

\begin{table}[!ht]
\centering
\begin{tabular}{c|cccc}
\toprule
 & \multicolumn{2}{c}{AMP} & \multicolumn{2}{c}{PMP}\\ 
               				 & PFP & AFP & PFP & AFP\\ 
\midrule
$\pi_{ss}$	           		 &  1.0100  &  1.0100  &  1.0014  &  1.0014 \\
$m_{ss}$           			 &  1.7157  &  1.7157  &  2.0614  &  2.0614 \\
$c_{ss}$/$n_{ss}$/$y_{ss}$   &  1       &  1       &  1       &  1      \\
$b_{ss}$           			 &  4   	&  4       &  4       &  4      \\
$u_{ss}$           			 & -1.0170	& -1.0170  & -1.0118  & -1.0118 \\ 
$\gamma_0$   				 & -0.0566  &  0.0234  & -0.0426  &  0.0375 \\            
\bottomrule
  \end{tabular}
\caption{Steady state values under different policy regimes: active/passive monetary policy (AMP/PMP, $\pi^*$/$\pi_L$) and passive/active fiscal policy (PFP/AFP). Source: Authors' calculations.}
\label{table:cali_ss}
\end{table}

\subsection{Learnability of steady states}
\label{sec:learnability}

\subsubsection{Adaptive learning}
Learnability under adapative learning for the different policy regimes in Table\ \ref{table:cali_ss} is determined via the eigenvalues of matrix $\mathbf{B}$ in (\ref{eq:linear_mdl}) and described in Proposition 2. This is graphically summarised in Figure\ \ref{fig:determinacy7} for the parameterisation in Table\ \ref{tab:para}. The horizontal axis shows the fiscal response parameter $\gamma$ and the vertical axis inflation. For monetary policy, the two steady steady inflation values correspond to the intersections in Figure\ \ref{fig:ss}, which are indicated by the horizontally dashed lines. They mark the active and passive monetary policy regimes at $\pi^*$ (AMP) and $\pi_L$ (PMP), respectively. For fiscal policy, the centre vertical grey line marks the boundary between passive and active fiscal policy. Passive and active fiscal policy as used in our analysis, are marked by the vertical red lines at $\gamma_P=0.02$ (PFP) and $\gamma_A=0$ (AFP), respectively. The cross of solid grey lines separate the four policy regimes. In the adaptive learning literature the learnability criterion from Proposition 2 is used as a selection criterion for policy regimes. The determinate regimes (AMP-PFP and PMP-AFP) are learnable while the explosive (AMP-AFP) and indeterminate (PMP-PFP) regimes are not. We now investigate which of these policy regimes are learnable by deep reinforcement learning.

\begin{figure}[!ht] \centering%
\includegraphics[width=0.55\textwidth]{./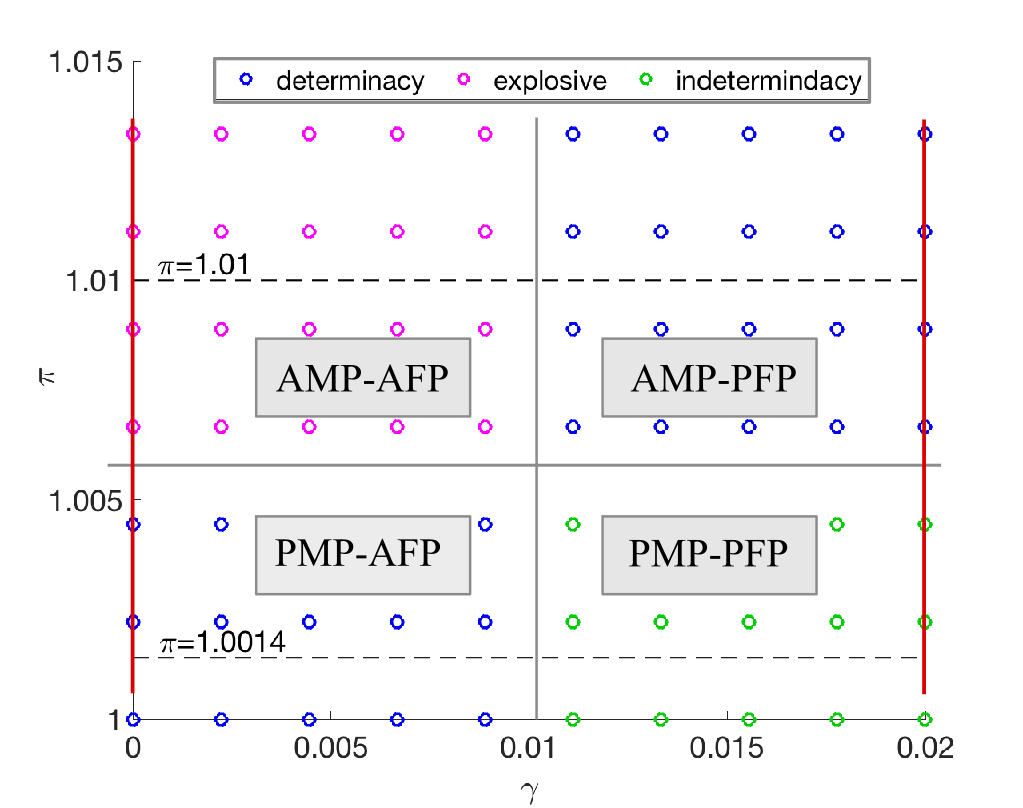} 
\caption{Local dynamic stability properties of steady state values by policy regime. Source: Athors' calculations.} \label{fig:determinacy7} 
\end{figure}

\subsubsection{Deep reinforcement learning}
While there exists a correspondence between the dynamic stability of non-linear maps and local linear approximations, there is no such general relation for deep reinforcement learning problems. Hence, learnability is an empirical question, which we address via numerical simulations. In each of our experiments, we define a region of interest in the household's action space and the state space around either the low or high inflation steady state according to the policy regime of interest. We follow the learning protocol described in Algorithm \ref{algo:train_test}. Details of the actions, states and settings of the learning algorithm following \citep{Haarnoja2018sac} are listed in Table\ \ref{tab:para_2} in the Appendix. Exogenous shocks are disabled in this and the following analyses. Their effects are investigated separately in Section\ \ref{sec:uncertainty}. Most experiments run for a total of $N_{train}=2.5e6$ training steps. We conduct $N_{test}=10$ test episodes for each ten thousand learning steps, i.e. $N_{interval}=1e4$. We say that a state or regime is learnable if the household's action values converge to the corresponding steady state values. The sensitivity of our results to some of the choices in the learning setting is discussed below.\\[.2cm]

\begin{table}[!ht]
\centering
\begin{tabular}{c|cccc}
\toprule
 & \multicolumn{2}{c}{AMP} & \multicolumn{2}{c}{PMP}\\ 
               & PFP & AFP & PFP & AFP\\ 
\midrule
AL       & yes  & no & no & yes \\
DRL	     & yes & yes & yes$^{\dagger}$ & yes$^{\dagger}$ \\
\hline
 & \multicolumn{4}{c}{$\Delta_{ss}$ (\%) for DRL}\\
\hline
$\pi$    & 0.067  & -0.159  & 9.199  & 5.209  \\
$b$      & 0.001  & 0.002  & -0.038  & -0.024  \\
$n$      & 0.000  & 0.002  & -0.007  & -0.001  \\
$m$      & -0.022 & 0.048  & -11.536 & -7.174  \\
$u$      & 0.001  & -0.002  & 0.345 & 0.192  \\
\hline
 & \multicolumn{4}{c}{$|\Delta_{ss}|$ (\%) for DRL}\\
\hline
$\pi$  & 0.346  & 0.278  & 9.217  & 5.209  \\
$b$    & 0.005  & 0.004  & 0.038  & 0.024  \\
$n$    & 0.004  & 0.003  & 0.009  & 0.003  \\
$m$    & 0.091  & 0.089  & 11.569 & 7.364  \\
$u$    & 0.003  & 0.003  & 0.346 & 0.196 \\
\bottomrule
  \end{tabular}
\caption{Comparison of learnability of different policy regimes for adaptive learning (AL) and deep reinforcement learning (DRL): active/passive monetary policy (AMP/PMP at $\pi^*$/$\pi_L$) and passive/active fiscal policy (PFP/AFP). $\Delta_{ss}$ and $|\Delta_{ss}|$ measure the mean and mean absolute difference in percentage from their respective steady state values for the end of test episodes during the last $5e5$ steps of DRL training. The numbers of inflation refer to difference to net inflation, i.e. $1$ and $0.14$ for $\pi^*$ and $\pi_L$, respectively. $^{\dagger}$ indicates that action values are mostly learned well, but with some discrepancy in the learning of steady state money holdings. Source: Authors' calculations.}
\label{table:ss_conv}
\end{table}

{\bf AMP-PFP regime:} This is the classically considered policy regime of monetary dominance around the target inflation level $\pi^*$.  All results we present in the following are taken from test episodes between learning intervals. We focus on the end of these episodes, that is, the final action and state values of each test episode to assess the state of convergence of the household's behaviour and how it compares with steady state values.

The convergence of household actions during learning to their respective steady state values is shown in Figure\ \ref{fig:ss_AMP_PFP}. Consumption choices are implied by inflation via the market clearing condition \eqref{eq:gmc}. The vertical axis shows the distance to steady state values relative to the maximal test distance observed. This normalisation allows for the uniform comparison of learning behaviour for different actions as their numerical scales differ. All lines are moving averages over 25 learning intervals and test cycles. To account for volatility in learning outcomes, e.g. due to the randomisation of episodes' initial states and differences in convergence, 95\% confidence intervals are indicated by the shaded areas.\footnote{We take two standard deviations of the rolling average in both directions.} Despite having no shocks in the model, the finite confidence intervals come from the random initialisation of test episodes which lead to slightly different end points. These are distributed around the steady state values of the model in the rational phase.

We see that households learn the optimal steady state solution well, with all actions converging synchronously, and that actions subsequently stay at this point. The initial increase in the learning distance is due to the `breaking of randomness' at the beginning of learning. This randomness comes form the random initiations of neural network weights. Initial random behaviour can on average be close to steady state values, because experiences are sampled uniformly around them, such that errors cancel each other out. However, we expect a large variance in outcomes during this initial random phase, which is indeed the case as can be seen from the wide confidence intervals at the beginning of learning. In summary, we can divide  the learning process into three phases, an initial random phase, an intermittent learning phase and a terminal rational phase. This final phase means that the household has learned the rational expectation solution, which we will quantify in more detail below.

We report learning behaviour on this normalised scale, because absolute deviations from steady state values depend on several factors of both the learning algorithm and model parameterisation. However, their magnitudes may nevertheless be instructive, as we see the maximal distance an agents behaviour deviates from the optimal during learning conditioned on the settings of the experiment. The peak distances at the transition to the learning phase in Figure\ \ref{fig:ss_AMP_PFP} for net inflation, bond holdings and hours worked are (11.07\%/11.07\%), (0.60\%/29.92\%) and (0.14\%/7.12\%), respective. The first number refers the distance to the steady state value and the second compares this distance to corresponding action range in Table\ \ref{tab:para_2}. We will look at the interpretation of these number in more detail when discussion inflation expectations in more detail below.

\begin{figure}[!ht] \centering%
\includegraphics[width=0.7\textwidth]{./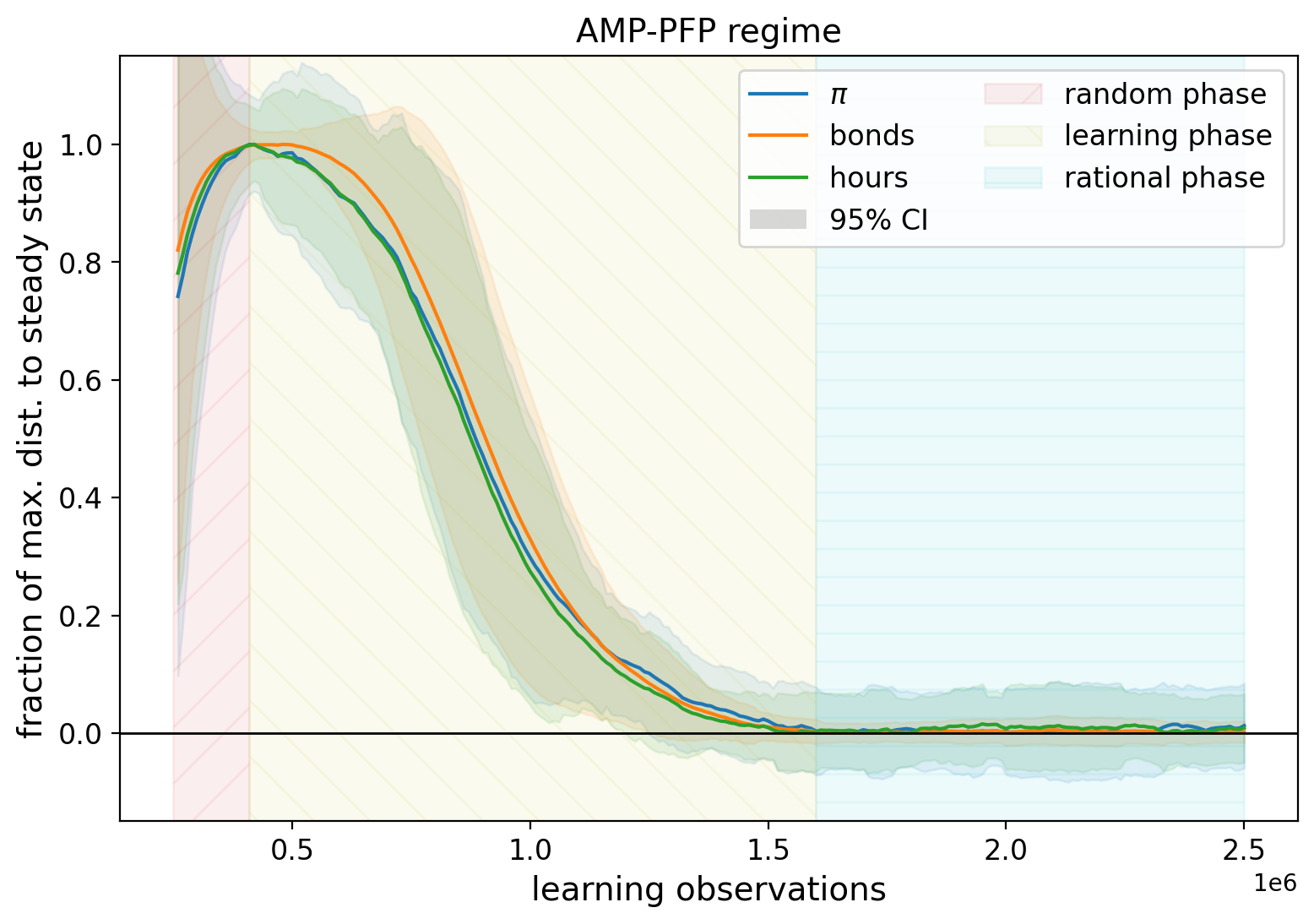} 
\caption{Steady state test convergence of housefold actions for AMP-PFP regime at end of episodes. Shaded bands show 95\% confidence intervals. Hatched areas show learning phases. Source: Authors' calculations.} \label{fig:ss_AMP_PFP} 
\end{figure}%

{\bf All policy regimes:} The learning results are summarised in Figure\ \ref{fig:ss_conv} and Table\ \ref{table:ss_conv} (the learning of money holdings is shown separately in Figure\ \ref{fig:ss_money} in the Appendix). The learning behaviour is qualitatively the same for all policy regimes. We conclude that all regimes are learnable by deep reinforcement learning unlike to adaptive learning. This means that the local linear dynamic properties of the model around a steady state are not necessarily a selection criterion if this state can be attained by the agent. The reason for this is that the household under deep reinforcement learning is not bound by the dynamics of the linearised system, but instead by the ``global map''  given by long-term utility maximisation. In the linear system wandering off the optimal path may lead to unstable (learning) dynamics, while in deep reinforcement learning such an action may just be identified as an action improvement rewarded by higher utility. Economically, this means that all policy regimes can become entrenched if the household spends enough time close to the corresponding steady state. More generally, this shows that deep reinforcement learning is a ``global''  solution technique, as it is capable of retrieving multiple steady state solutions without the need to specify a localised approximation.

However, there are also differences in deep reinforcement learning outcomes between the different policy regimes. These are again not related to the dynamic properties of the linearised system but rather to the state of monetary policy. The regimes of passive monetary policy, i.e. where the slope of the Taylor rule (\ref{eq:gtaylor}) is smaller than $\pi/\beta$, requires more time to be learned with less precision. This is especially the case with passive fiscal policy, where convergence takes considerably more observations, and is more noisy.

The details are instructive from an economic and methodological perspective. Table\ \ref{table:ss_conv} shows that the imprecision in learning is related to the monetary policy regime. Most of the deviation from steady state values stems from inflation, and to a lesser degree from bond holdings.\footnote{Note that the inflation values relate to net inflation at $\pi_L$ which is 0.014.} This translates into larger average deviations of money holdings in the PMP regime, as shown in Figure\ \ref{fig:ss_money} in the Appendix. This can be related to the general characteristics of the learning problem and aspects of deep reinforcement learning. The Taylor rule provides less feedback for low inflation values, and the household's utility function is relatively flat with respect to money holdings at the steady state values. This matters because episodes terminate when absolute changes in utility fall below the threshold $d^{min}_u$. This contributes to less precise learning of the low-inflation steady state in the current setting. This discrepancy is small in inflation terms\footnote{Of the order of 0.001 of net inflation in percentage points.} but gets amplified in the volatility of money holdings in the low-inflation steady state. Finally, steady state utility values in Table\ \ref{table:ss_conv} are also close to the long-run optima with mean deviation well below 1\% in all regimes. We next investigate in more detail the learning agent's actions going beyond mere convergence to steady state values.

\begin{figure}[h!]  % spans both columns
\begin{subfigure}{0.55\textwidth}
\hspace*{-1.cm}
\includegraphics[width=\linewidth]{./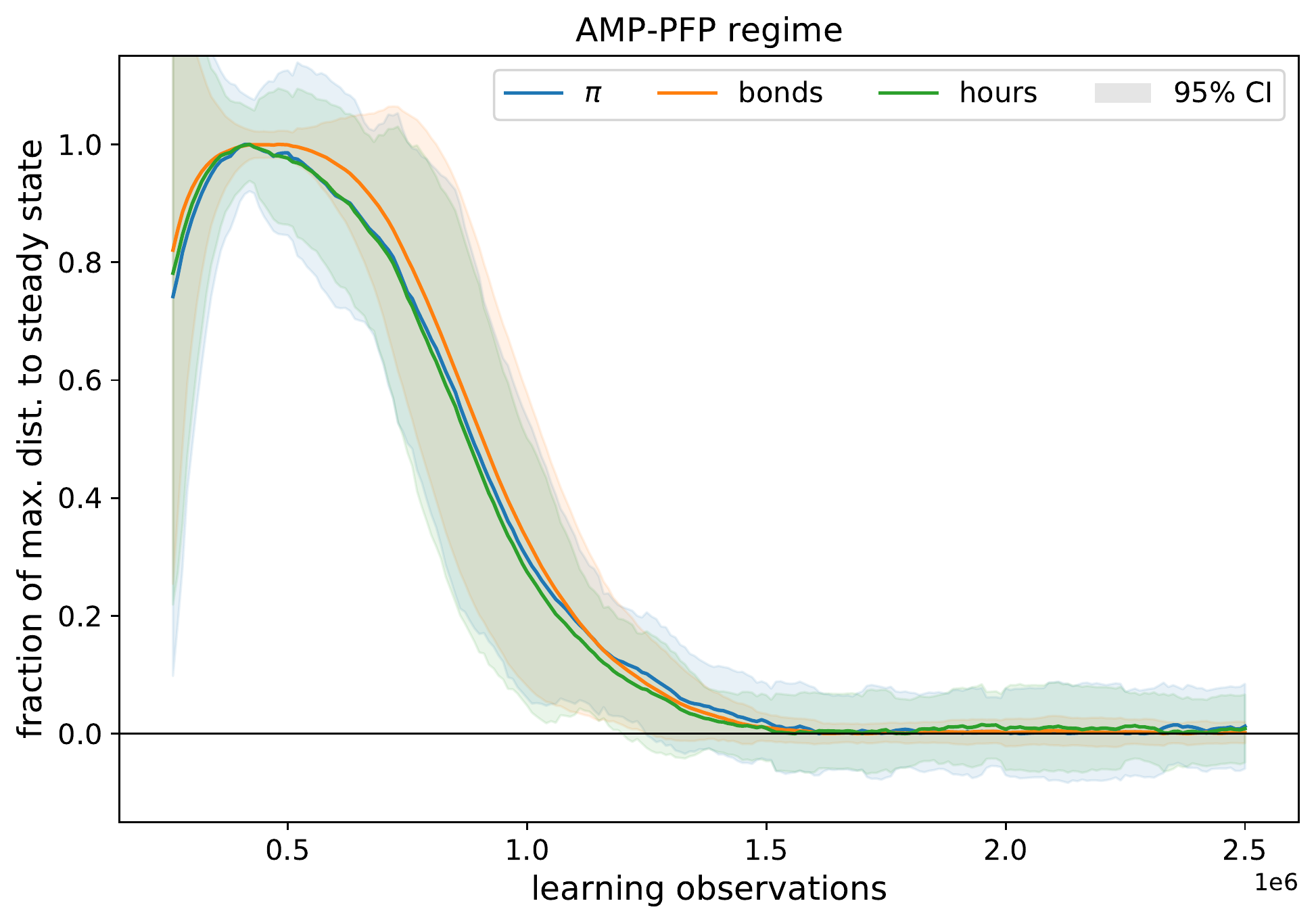}
%\caption{Network 1}
\end{subfigure}
%\hfill % maximize the horizontal distance between the graphs
\begin{subfigure}{0.55\textwidth}
\hspace*{-1.cm}
\includegraphics[width=\linewidth]{./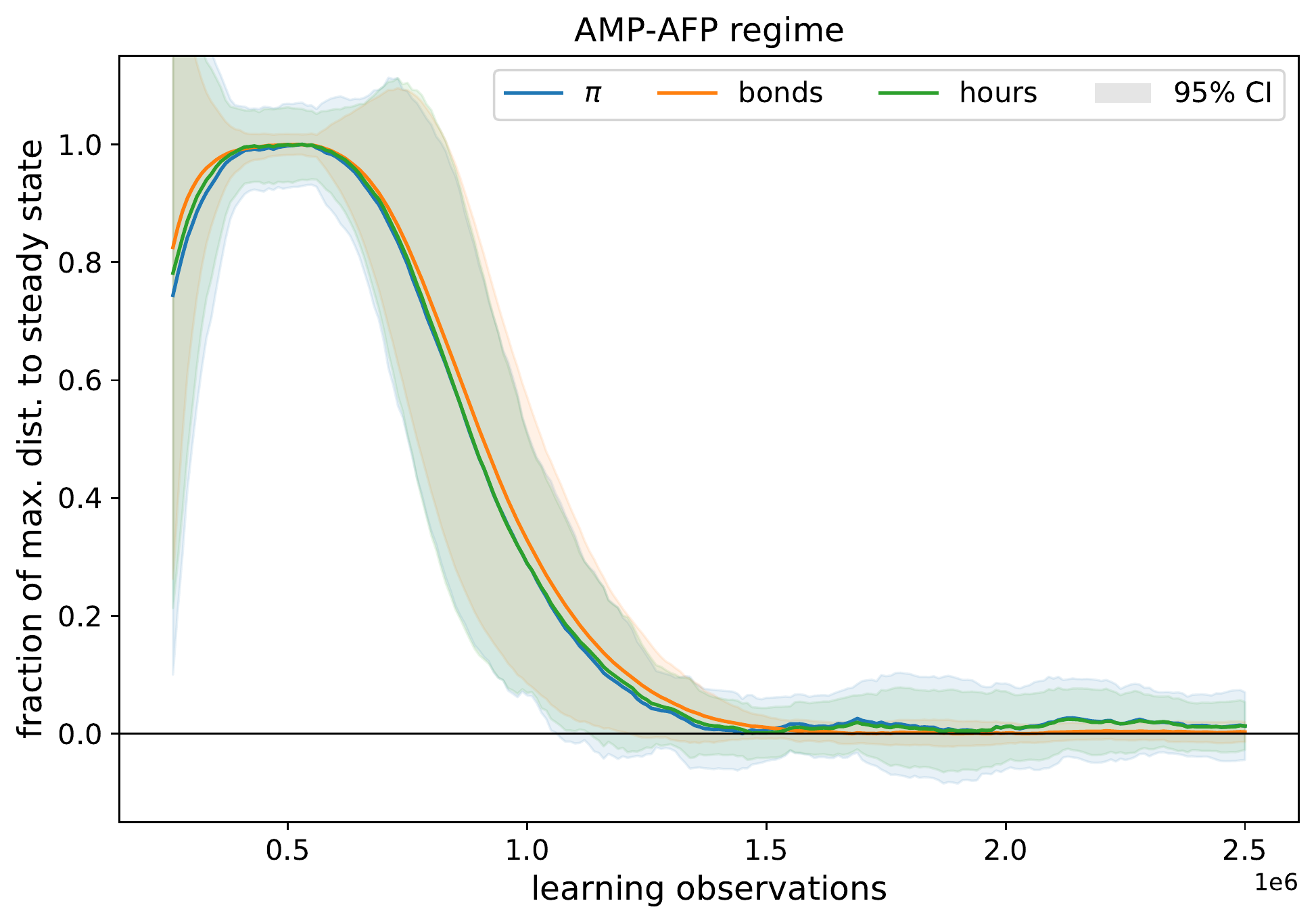}
%\caption{Network  2}
\end{subfigure}

%\vspace*{-.5cm}
%\bigskip  % some extra vertical whitespace
\begin{subfigure}{0.55\textwidth}
\hspace*{-1.cm}
\includegraphics[width=\linewidth]{./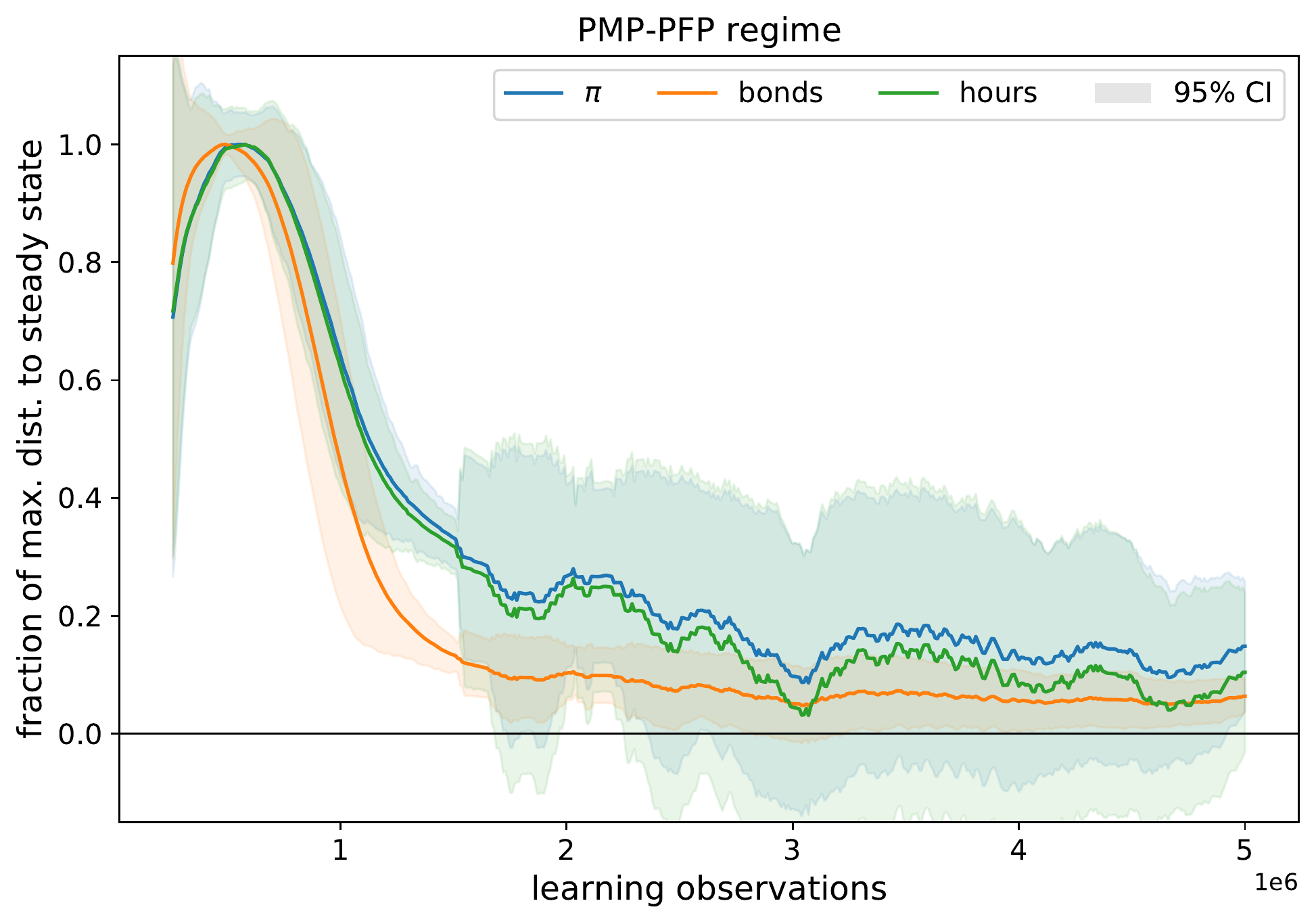}
%\caption{Network  3}
\end{subfigure}
%\hfill % maximize the horizontal distance between the graphs
\begin{subfigure}{0.55\textwidth}
\hspace*{-1.cm}
\includegraphics[width=\linewidth]{./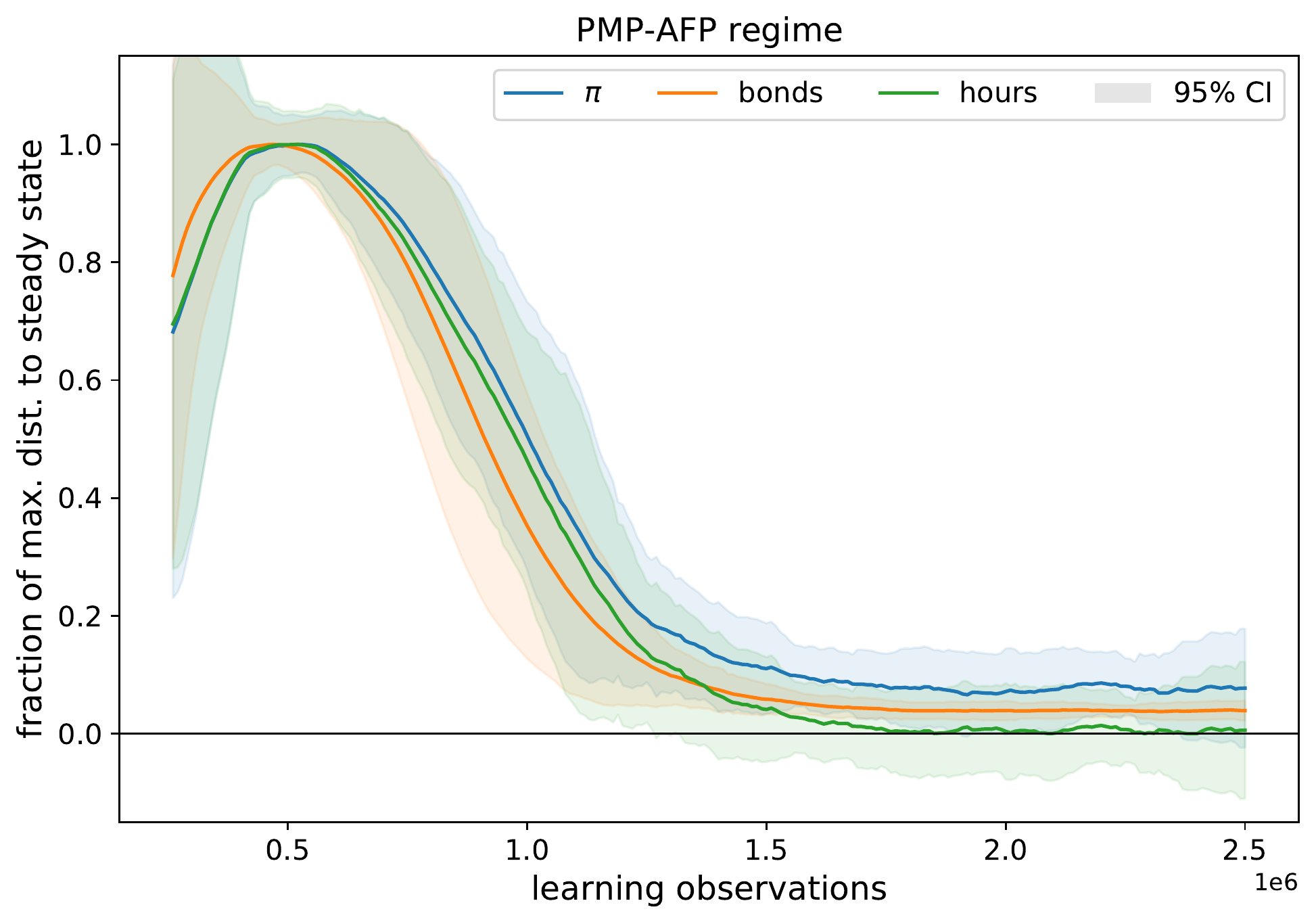}
%\caption{Network  4}
\end{subfigure}
\caption{Comparison of steady state action convergence for different policy regimes at end of test episodes: monetary policy (rows) and fiscal policy (columns). Shaded areas show 95\% confidence intervals. Source: Authors' calculations.} \label{fig:ss_conv}
\end{figure}

%\clearpage
\newpage

\subsection{Measuring bounded rationality}
\label{sec:rationality}

Household optimal or rational behaviour under deep reinforcement learning is taken to mean that the agent follows an Euler path, i.e. that its actions are in line with the first-order conditions (FOC) from Eq.\ \ref{eq:fisher}--\ref{eq:ls1}, and that it has learned about the steady state values of the model. The deep reinforcement learning agent does not know about either on the onset of learning but rather has to infer them to solve its reward maximisation problem. Hence, the FOC together with the household's actions and the realised values of state variables allow us to gauge the rationality of the agent at different stages of learning. We define the {\it FOC-distance} as
\begin{equation}\label{eq:foc_dist}
d^{FOC}_x\,\equiv\, \big|FOC(x)-1\big|\,,
\end{equation}
which allows us to evaluate deviations in a standardised way. A value of zero implies that the agent is on an Euler path, fulfilling its first order conditions. The explicit expression for the Euler equation (\ref{eq:fisher}), which we call the {\it Euler distance}, is 
\begin{equation}\label{eq:eul_dist}
d^{FOC}_{\pi}\,=\, \bigg|\beta\, \mathbb{E}_t\big[\big(\frac{c_{t+1}}{c_t}\big)^{-\sigma} \frac{R_t}{\pi_{t+1}}\big]-1\bigg|\,.
\end{equation}

We test FOC-learning by focusing again on the common AMP-PFP regime. Eq.\ \ref{eq:fisher}--\ref{eq:ls1} are evaluated analogously to the previous section by looking at the final transitions of each test episode at different stages of learning. Deep reinforcement learning does not provide us with explicit expectations. Instead, expected values in (\ref{eq:eul_dist}) are taken to be next-period realised values. That is, the agent's expectations are interpreted consistent with its actions, which is a simple form of self-fulling expectations.

Normalised FOC-learning curves are shown in Figure\ \ref{fig:FOC_AMP_PFP} for the Euler equation, money demand and labour supply. These are very similar to the convergence of household actions in Figure \ref{fig:ss_AMP_PFP}. The three learning phases can be clearly identified and coincide with convergence to the steady state values. This is in line with the GPI framework from Section\ \ref{sec:gpi}. There is a joint convergence of behaviour (FOC) and state learning. This means that the state values of Figures\ \ref{fig:ss_AMP_PFP} and \ref{fig:ss_conv} are the corresponding equivalents of Eq.\ \ref{eq:foc_dist} in the state space. Convergence in both corresponds to the rational expectation equilibrium. This also means that more generally, Eq.\ \ref{eq:foc_dist} in either the action or state space quantify {\it bounded rationality} in this class of models. For instance, maximal distances in Figure\ \ref{fig:FOC_AMP_PFP} correspond to random actions, while zero distances imply compliance with the FOC and convergence to a steady state equilibrium.

In adaptive learning, FOC-distances are zero by construction. However, bounded rationality can still be assessed within the same framework by looking at state convergence in Eq.\ \ref{eq:sslearning}, provided Proposition\ 3 holds. The FOC-learning curves for all policy regimes are given in Figure\ \ref{fig:foc_all} in the Appendix. These again converge to zero, or very close to zero within confidence bounds, in all four cases. The small discrepancies from steady state actions values are reflected in non-zero FOC-distances as suggested by GPI. Contrary to the learning of steady state actions, FOC-distances may converge partly asynchronously. However, we always observe the three phases of learning, and full convergence of all quantities coincides between state and action learning. We next look into how these learning dynamics can be used to arrive at testable results.

\begin{figure}[!ht] \centering%
\includegraphics[width=0.7\textwidth]{./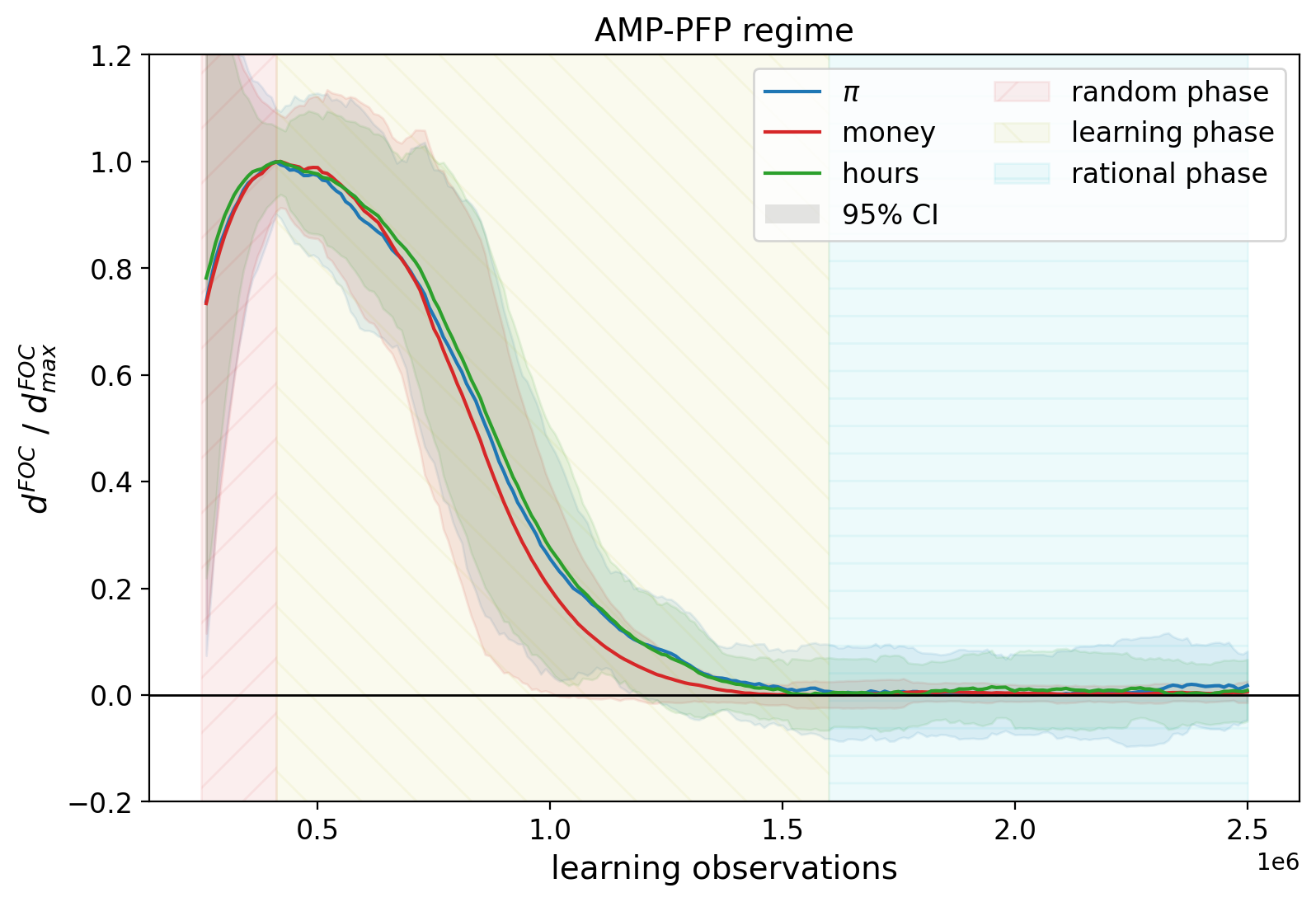} 
\caption{FOC-learning of housefold actions for the AMP-PFP regime at end of episodes. Shaded bands show 95\% confidence intervals. Hatched areas show learning phases. Source: Authors' calculations.} \label{fig:FOC_AMP_PFP} 
\end{figure}%

%\clearpage
\newpage

\subsection{Agent behaviour as a model parameter}
\label{sec:expectations}

One of the contributions of this study is the development of tools to quantify bounded rationality. The FOC-distances (\ref{eq:foc_dist}) do this by measuring the difference in household behaviour through the choice of consumption, bond savings and hours worked compared to optimal value during learning. 
The basic idea is that the agent solves the model at any stage or learning, being `more or less rational.' Working with such an agent, one needs to determine the state of learning. Some evaluation criterion will be used for this, e.g. like FOC-distances or a calibration to data, such that the state of learning is a free parameter.  

Questions of interest are how to interpret these learning results, and, relatedly, how these can be brought to the data? The representative agent setting, and the fact that learning quantities themselves, like the steps to convergence, or the maximal FOC-distances during learning, depend on a variety of parameters of the model and learning algorithm, make the interpretation of results more difficult. This is even more so when we want to connect the learning setting to the real world. One possibility of interpretation is to interpret the state of learning, as expressed through steady state or FOC-distances, as the collective state of the population of agents in the economy represented by the agent at different stages of learning. The learning curves shown before offer devices to assess this state during learning. This allows to interpret learning results detached from the parameters of the model or learning algorithms used.

We demonstrate how such an approach can be used to address agent behaviour in the context of inflation expectations. Using the Euler distance, we run an experiment to investigate the relationship between the current interest rate $R_t$ and next-period inflation $\pi_{t+1}$ as determined by the household's consumption choice. The agent behaves consistently with its own model $\mathcal{P}_{\phi}$ based on the current state, so next-period inflation is equated with $\mathbb{E}_{t}[\pi_{t+1}]$. We fix the household's real consumption schedule at its optimal value, i.e. $c_t=c_{t+1}=c_{ss}$. This is achieved by setting $n_t=n_{ss}$ at all times, which fixes $c_t$ through the clearing of the goods market. That is, we isolate the learned relation between inflation and interest rates by fixing other actions at their optimal values. The Euler equation (\ref{eq:fisher}) then simplifies to the Fisher equation, i.e. a simple relation between this period's interest rate and next period's inflation,
\begin{equation}
\label{eq:fisher2}
\mathbb{E}_t[\pi_{t+1}]\,=\,\beta R_t\,.
\end{equation}

We next take the household agent at each test stage of learning\footnote{Algorithm \ref{algo:train_test} saves the household agent at each test loop at different stages of learning. We now reload the partially trained agent.} and rerun all test cycles with hours fixed as described above recording all state transitions. The results of this exercise are depicted in Figure\ \ref{fig:exp_AMP_PFP}. Both axes show net inflation expectations, the horizontal axis the ones implied by the current rate of interest $R_t$ and the Fisher equation (\ref{eq:fisher2}), and the vertical axis the actual agent expectations measured through the household's consumption choices. The dotted diagonal line describes rational behaviour, i.e. following FOC, again the Fisher equation. The household's actual actions during each test transition are given by the scatter points at different times of learning as indicated by the colour coding. The agent's learning curve is traced out by the dashed line. The vertical distance between this line and the diagonal measures the average deviation of expectations from the rational expectation equilibrium alongside this dimension, that is, bounded rationality.

We can draw the following conclusions from this experiment. The household's initial inflation expectations deviate about 5-7\% from optimal expectations (random phase, purple dots). During learning the household's expectations and actions converge to the optimal values, i.e. the vertical distance between the two lines narrow (learning phase, blue dots). Eventually, agent actions coincide with the Fisher equation (rational phase, yellow dots).

This experiment shows how we can arrive at tangible, i.e. testable, propositions from the deep reinforcement learning framework. This type of analysis could now be used to bring a model to the data by quantifying  real-world agents' state of learning or their bounded rationality. FOC-distances could be estimated from suitable datasets, and agents at the corresponding learning stage could then be used for further analyses, like counterfactual experiments and their outcomes be compared to the conventional case of fully rational agents. We leave this to future work.

\begin{figure}[!ht] \centering%
\includegraphics[width=0.7\textwidth]{./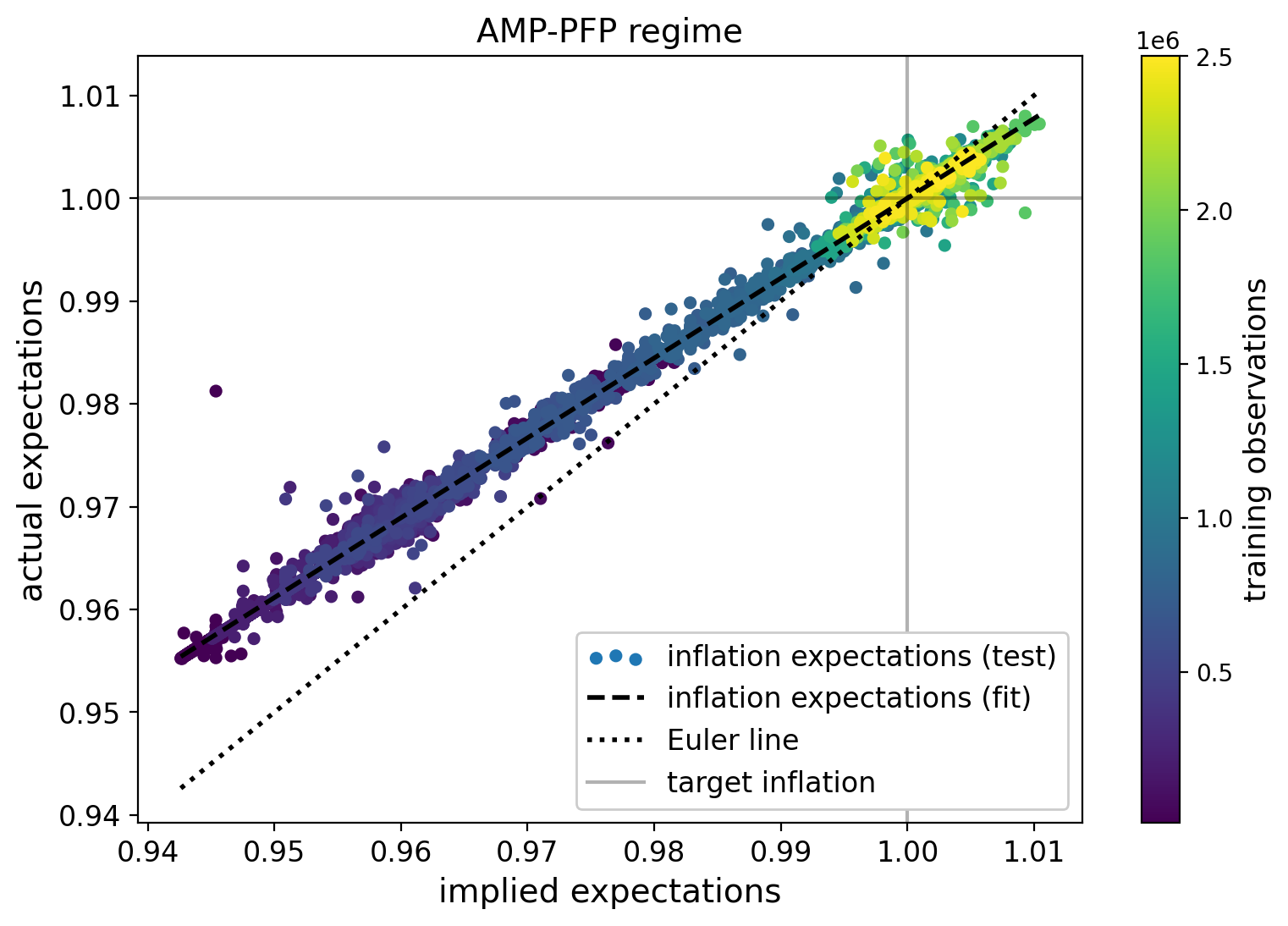} 
\caption{Evaluation of Euler equation (\ref{eq:fisher}) with consumption schedule, i.e. hours worked, fixed at the corresponding steady state value (Fisher equation) for AMP-PFP policy regime. Source: Authors' calculations.} \label{fig:exp_AMP_PFP} 
\end{figure}%

%\clearpage
\newpage

\subsection{Learning under uncertainty}
\label{sec:uncertainty}

The analyses so far have not taken uncertainty into account. However, it is an interesting question whether shocks to monetary or fiscal policy, or to the labour supply, affect the household agent's ability to learn in deep reinforcement learning, and if so, whether there are differences in outcomes. We repeat the experiments from Section\ \ref{sec:learnability} for the AMP-PFP and the PMP-AFP regimes, with the shock sizes for both scenarios given in Table\ \ref{tab:para}. These are comparably small for the active policy regime but large relative to steady state net inflation in the low-inflation state. The agent again follows the learning protocol in Algorithm\ \ref{algo:train_test}, with the difference that now shocks are realised at each state transition, and that shocks enter the state observations.\footnote{The state dimension was not reduced in previous experiment without shocks, but rather $\epsilon^{\tau}_t$, $\epsilon^{R}_t$, $\epsilon^{y}_t$ where set to their respective means of zero, one and one. Note that constant terms do not affect the learning outcome as there is no variation entering the optimisation process.}

The results for these experiments are summarised in Figure\ \ref{fig:learning_wshocks}, where we again look at end-of-episode transitions during testing.\footnote{We take the second last transition here as the last transition contains a shock to next period's monetary policy which is not relevant if an episodes ends.} Convergence again means actions and FOC-distances are in line with those for the cases without shocks and converge to the rational expectation equilibrium. The only difference in learning is that the confidence intervals are generally wider. This is expected in the presence of external random fluctuations of state variables. Thus, the presence of shocks does generally not inhibit learning.

An interesting aspect of learning in the presence of shocks is that solutions correspond to stochastic steady states, which are generally hard to assess.
\begin{figure}  % spans both columns
\begin{subfigure}{0.55\textwidth}
\hspace*{-1.cm}
\includegraphics[width=\linewidth]{./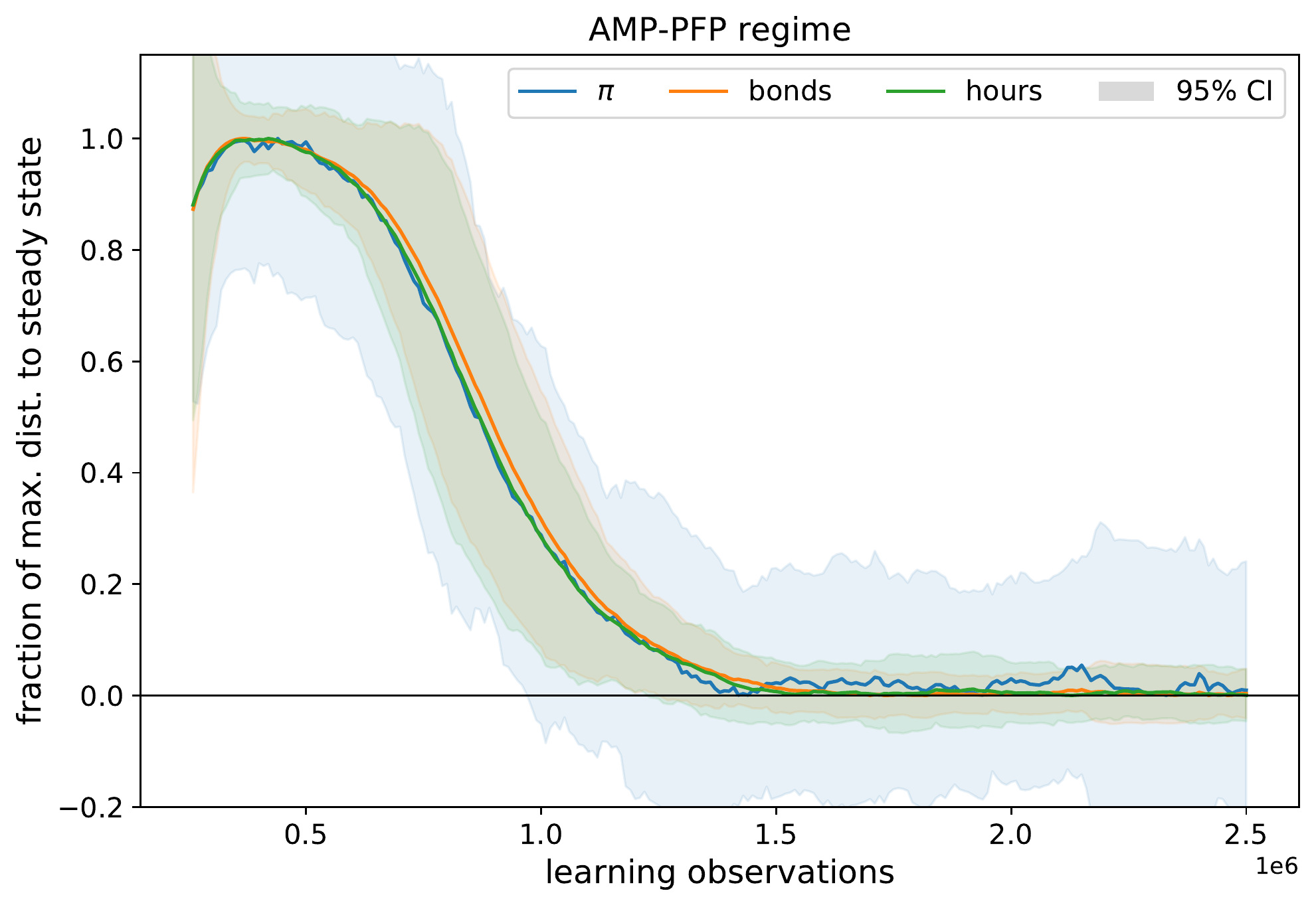}
%\caption{Network 1}
\end{subfigure}
%\hfill % maximize the horizontal distance between the graphs
\begin{subfigure}{0.55\textwidth}
\hspace*{-1.cm}
\includegraphics[width=\linewidth]{./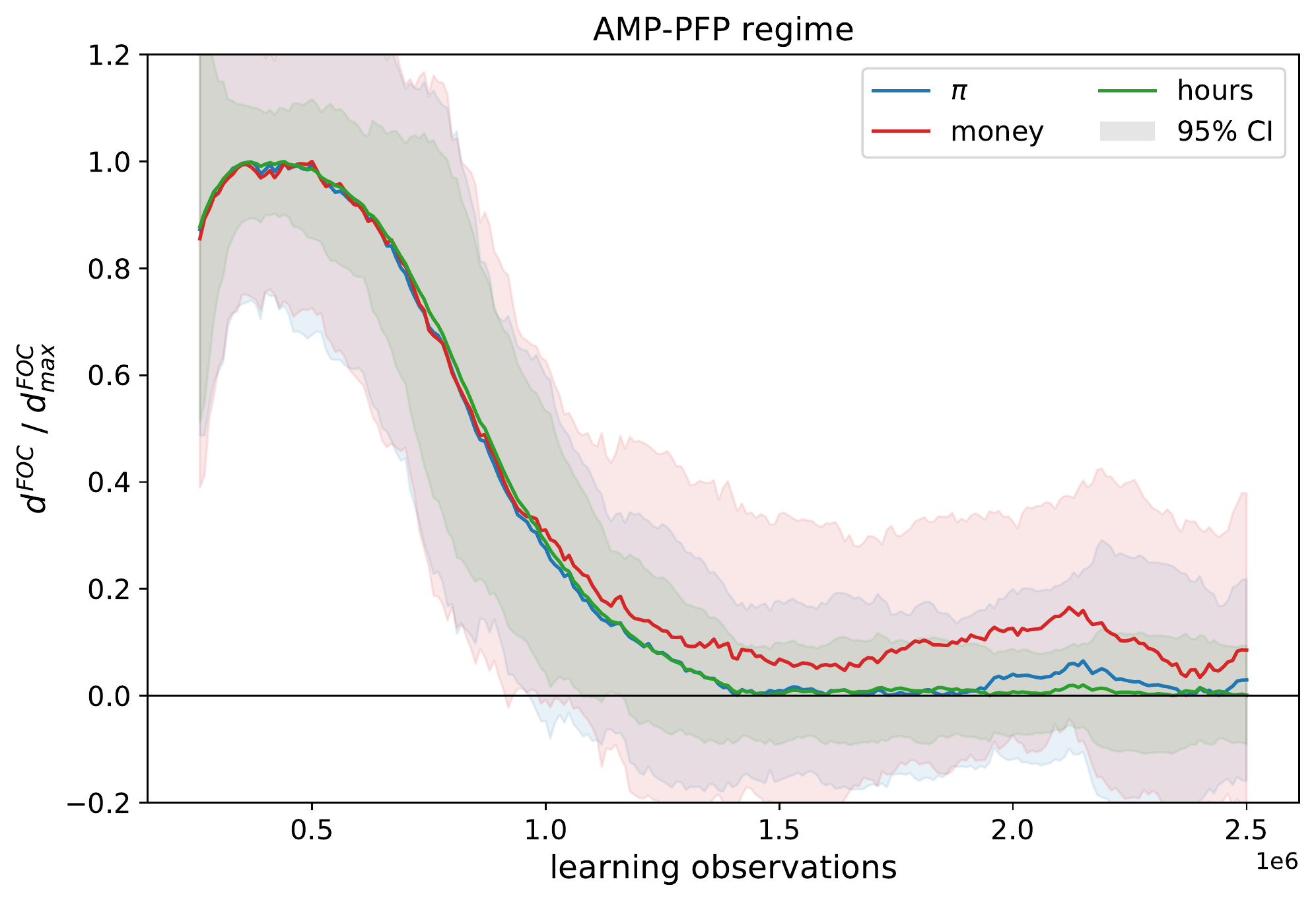}
%\caption{Network  2}
\end{subfigure}

%\vspace*{-.5cm}
%\bigskip  % some extra vertical whitespace
\begin{subfigure}{0.55\textwidth}
\hspace*{-1.cm}
\includegraphics[width=\linewidth]{./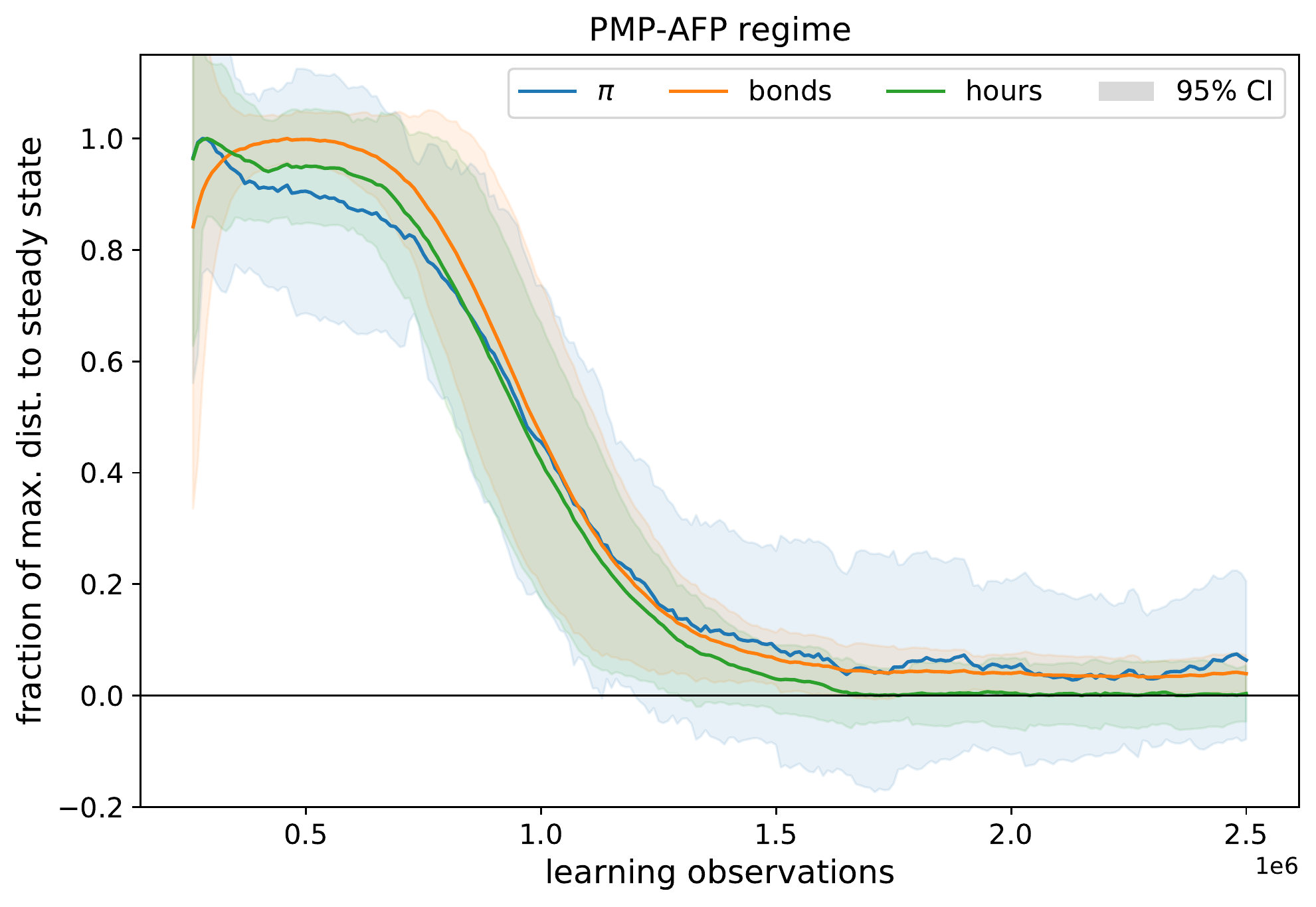}
%\caption{Network  3}
\end{subfigure}
%\hfill % maximize the horizontal distance between the graphs
\begin{subfigure}{0.55\textwidth}
\hspace*{-1.cm}
\includegraphics[width=\linewidth]{./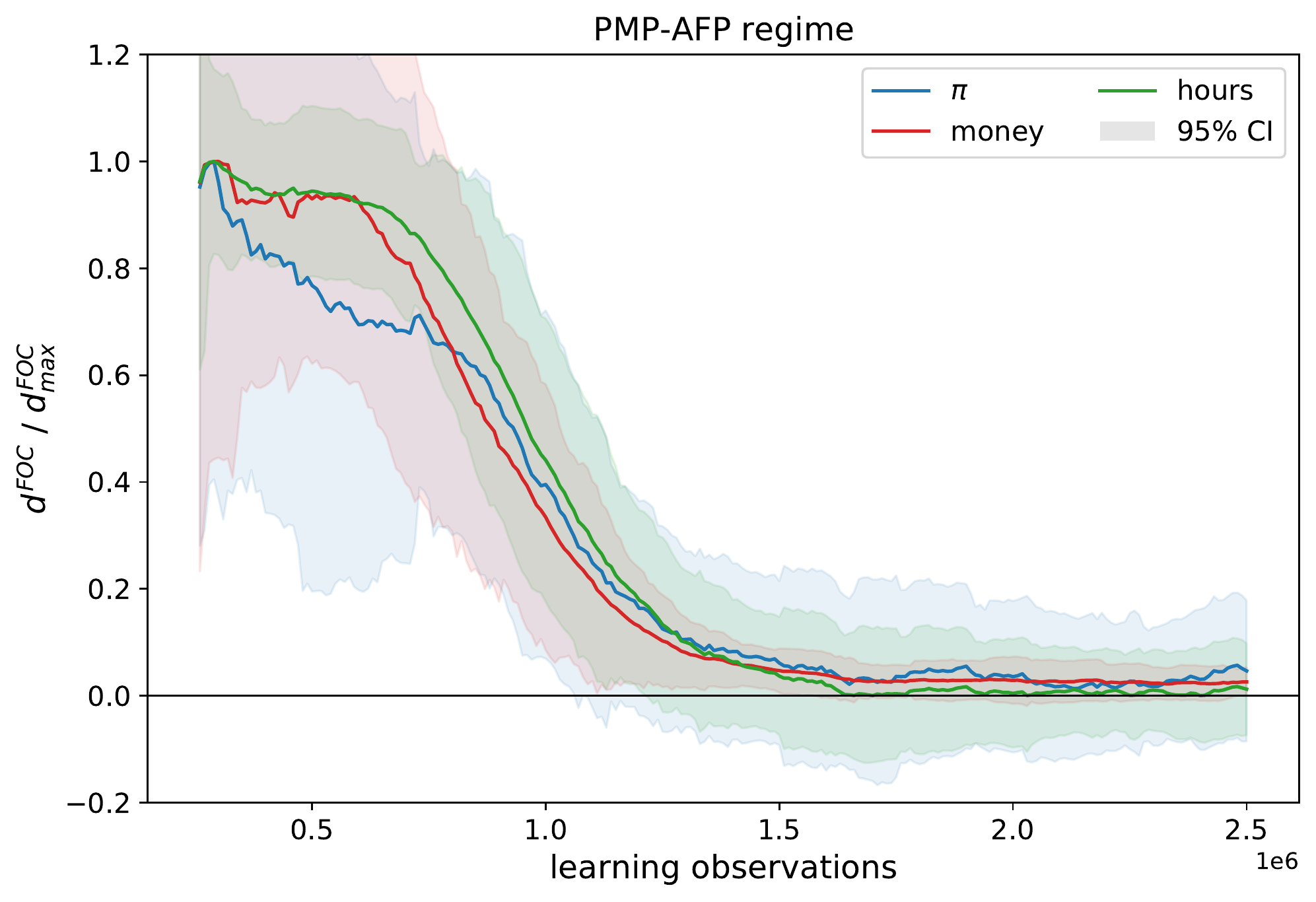}
%\caption{Network  4}
\end{subfigure}
\caption{Evaluation of steady state (left column) and FOC-distances (right column) learning curves in the presence of shocks within the AMP-PFP (upper row) and PMP-AFP (lower row) policy regimes for end-of-period test transitions. Source: Authors' calculations.} % Overall figure caption
\label{fig:learning_wshocks}
\end{figure}

%\clearpage
\newpage

\subsection{Learning challenges}
\label{sec:challenges}

Despite the general learnability results we presented, there are significant challenges to the application of deep reinforcement learning. The sensitivity to hyperparameter choices, or {\it brittleness} of deep and, in particular, deep reinforcement learning is known in the literature \citep{Haarnoja2018sac,Heaven2019brittle}. Some of this also affects our analysis. We discuss the main learning parameters relevant to the present analysis, that is, those factors which lie at the intersection between learning and the economic model setting, and that have a significant effect on the learning outcome. The aim of this is to lay out challenges for future research on the one hand, and, on the other, to facilitate the adaptation of deep reinforcement learning to economics and finance problem more generally.

Three hyperparameters relevant to our learning results are the speed of learning, action and state bounds, and episodic termination criteria:

\subsubsection{The speed of learning}
This is controlled by the {\it learning rate} $\alpha_{learn}$ which sets the step size by which the parameters in the agents functions $\mathcal{P}_{\phi}$ and $Q_{\theta}$ are updated during learning.\footnote{Let $Q(\theta)$ be a differentiable function and $\nabla_{\theta}Q$ its derivative. A gradient descent step during optimisation, e.g. to solve a least-square problem as in (\ref{eq:bell_resid}), takes the form $\theta\leftarrow\theta-\alpha_{learn}\nabla_{\theta}Q$; $\alpha_{learn}$ determines the size of this update.} The larger its values, the faster the agent learns, and the less steps it will need to converge. So, ideally $\alpha_{learn}$ is chosen as large as possible. However, too large a value destabilises learning, leading either to explosive or stagnant behaviour. Our baseline value $\alpha_{learn}=10^{-5}$ is relatively small. However, larger values led to unstable learning and breakdown of the optimisation algorithm, highlighting the challenges of our learning problem.

\subsubsection{Agent experience}
Ideally one would offer a learning agent the full state space to learn from. Unfortunately, this is quite challenging given the current setting and will likely lead to a failure to learn. For instance, the agent may end up converging to a mid-point between the two inflation steady states. In the current setting, learning outcomes are often best if the agent's action space is centred around one of the two steady states. This can be seen in Figure\ \ref{fig:conv_sym_lowCD}, which shows convergence to steady state action values in the PMP-AFP regime. Convergence is improved relative to the non-centred case in the lower right of Figure\ \ref{fig:ss_conv}. Note, however, that comparable improvement is not seen for the PMP-PFP regime, such that this is not a universal remedy to convergence issues. This is not the only parameter to improve learning in this regime. 

\subsubsection{Episode termination}
A feature of our learning problem is that optimal solutions are steady states, i.e. points in the action, state and utility space which map into themselves. This is accounted for by the episodic termination criterion using $d_{u}^{min}$ in our setting. This does not contain direct information about where the optimal solution of the household's problem lies. A large value of $d_{u}^{min}$ lets the agent have more experience as episodes terminate and restart after fewer steps. However, this may lead to decreased learning precision around steady states. On the other hand, a small value increases precision, but offers less experience to the agents as episodes take more steps and the agents potentially spends more time in `uninteresting' regions of the state space. The effect of increased precision on action convergence in the PMP-AFP regime is show on the RHS of Figure\ \ref{fig:conv_sym_lowCD}.  Here, $d_{u}^{min}$ has been set to 1e-9 instead of 1e-7.\footnote{A technical aspect to consider when setting a low $d_{u}^{min}$ is potentially increased memory need if episodic transitions are saved during training or testing.} Convergence again improves compared to the baseline.

\begin{figure}  % spans both columns
\begin{subfigure}{0.55\textwidth}
\hspace*{-1.cm}
\includegraphics[width=\linewidth]{./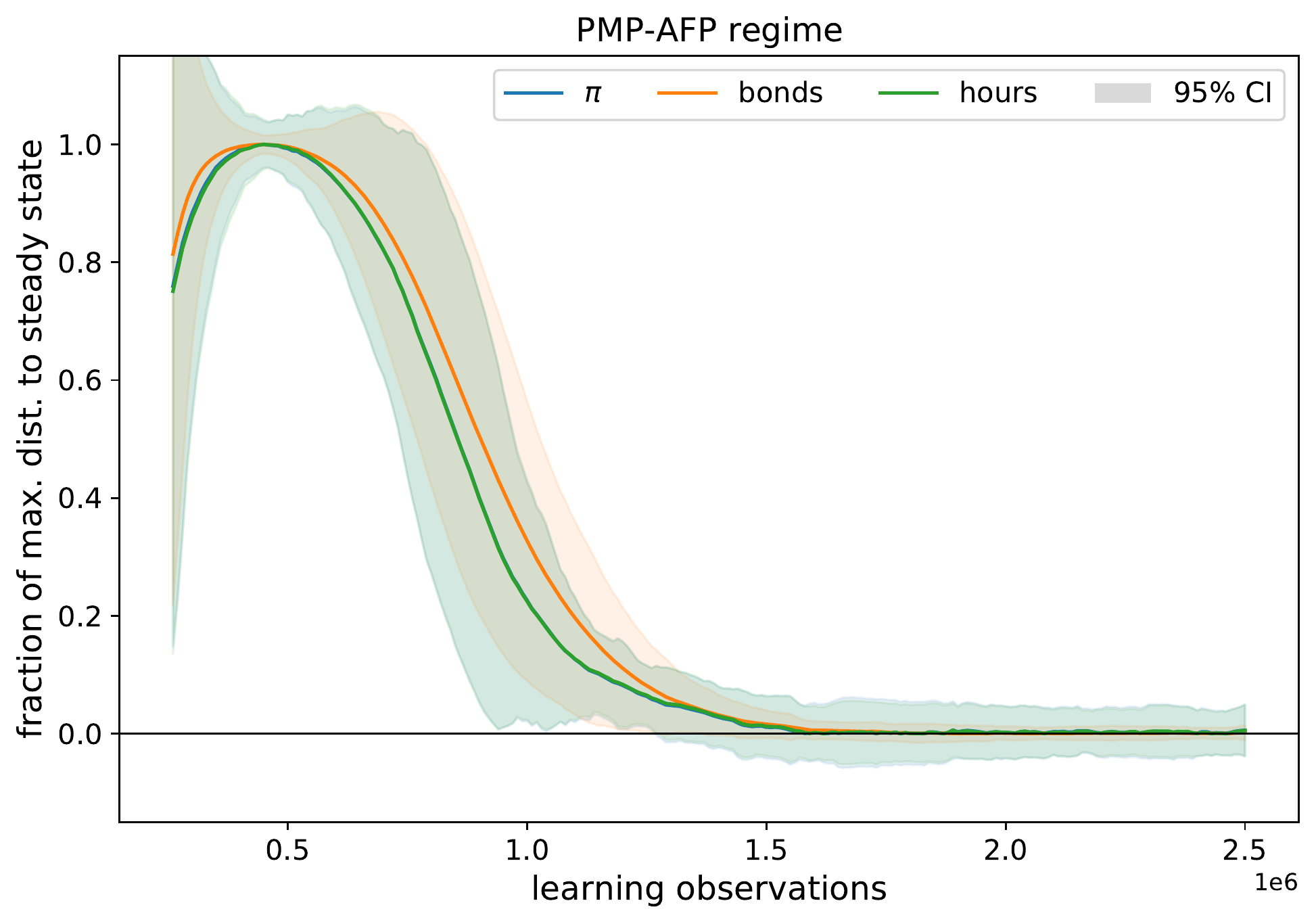}
%\caption{Network 1}
\end{subfigure}
%\hfill % maximize the horizontal distance between the graphs
\begin{subfigure}{0.55\textwidth}
\hspace*{-1.cm}
\includegraphics[width=\linewidth]{./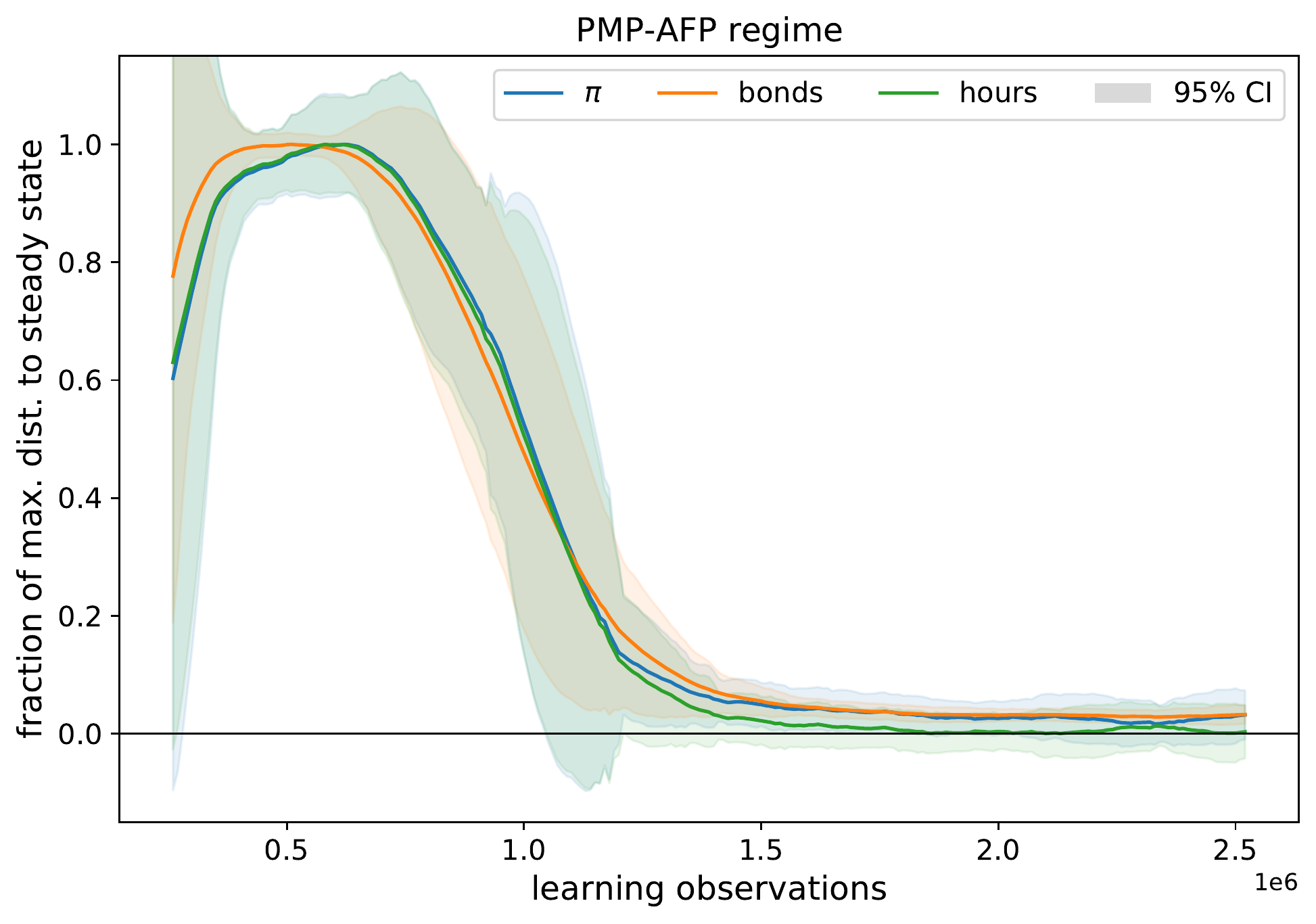}
%\caption{Network  2}
\end{subfigure}
\caption{Learning convergence of household actions in the PMP-AFP regime. LHS: Symmetric action space around $\pi_L$,  $c^{act}\in[1.0000,1.0028]$. RHS: decreased convergence distance $d^{min}_u=$1e-9. Shaded areas show 95\% confidence intervals. Source: Authors' calculations.} \label{fig:conv_sym_lowCD}
\end{figure}

\section{Conclusion\label{sec:conclusion}}
We propose the use of deep reinforcement learning to solve dynamic models commonly used in economic analysis, particularly DSGE models. Here, agents are capable of solving the model based only on state transitions, knowing their preferences in the form of a reward function, but not knowing any other part of the structure of the model economy. We apply deep reinforcement learning to a monetary model with a representative household agent. The salient feature of this model, which is commonly used in the learning literature, is the existence of two steady states, where one of them may be interpreted as a liquidity trap with lower inflation and higher money holdings. The interaction between monetary and fiscal policy regimes defines the learnability of solutions in well known adaptive learning approaches, where this then serves as a selection criterion for the study and plausibility of a regime. By contrast, the deep reinforcement learning agent recovers all possible steady state solutions under the different policy regimes. Given that this learning behaviour is more general, as seen through the lens of generalised policy iteration, this widens the scope of solutions of interest, irrespective of their local dynamics. This is achieved by the ``global''  nature of deep reinforcement learning without the need of a local model approximation. Agents' learning and behaviour is ultimately determined by the long-run reward subject to  constraints. In our case this consists of household utility subject to the standard inter-temporal accounting constraint.

Conceptually, deep reinforcement learning allows us to define and measure bounded rationality on a continuous spectrum of agent behaviour, which is limited by the rational expectations equilibrium as the learning goal. In other words, agent behaviour becomes a free parameter of the model. We proposed measures and procedures to assess this. These can be used in either computational exercises or to quantify real-world agent behaviour. For instance, the spectrum of learning given by state value or first-order learning curves, can be used to calibrate agent behaviour.

However, the deep reinforcement learning approach is not without challenges. Learning approaches based on deep artificial neural networks can be brittle and sensitive to parameter choices. We discussed some of the choices, like the learning rate, action and state space bounds or episodic termination criteria. Furthermore, we only considered the single-agent case in this study while it is known that multi-agent settings can be considerably more difficult to handle with these techniques. Deep reinforcement learning also requires substantial computational resources and specialised programming skills.

However, agent behaviour is one of the most contentious parts of modern economic theory and analysis. This criticism goes beyond the DSGE plus rational expectations modelling paradigm discussed here. Deep reinforcement learning, on the other hand, is a general purpose technology and can be applied to a wide range of problems. The only  requirement is that these problems can be moulded into its fairly general structure. As such, we believe that deep reinforcement learning offers a flexible and potentially highly useful tool to address major conceptual and practical concerns, and to interlink different approaches within a rich research agenda.

\clearpage
\bibliographystyle{agsm}
\bibliography{learning}
\clearpage

\vfill\pagebreak

\appendix
\numberwithin{equation}{section}
\numberwithin{figure}{section}
\numberwithin{table}{section}
\small

\section{Technical Appendix}
\label{sec:app}

\subsection{Derivation of \eqref{linear_BK}\label{appendix1}}
\label{sec:app_bk}

In an neighbourhood of a non-stochastic steady state $\pi$ and $c$, we can derive a linear approximation
\begin{align}
&\text{Euler Equation:\ \ \ \ \ \  \ \ \ \ }  \hat  R_t= \beta^{-1} E_t \hat \pi_{t+1}+\frac{\sigma}{\beta}\frac{\pi}{c} (E_t \hat c_{t+1}-\hat c_t) \label{eq:linear_ee}\\
&\text{Monetary Policy:\ \ \ \ \ \  \ \ } \hat R_t=\alpha \hat \pi_t+\delta \hat \varepsilon^R_t \ \ \ \text{where $\alpha=f'(\pi)$ and $\delta=f(\pi)$}    \label{eq:linear_mp} \\
&\text{Fiscal Policy \& GBC:\ \  \ }  \hat b_t+\hat m_t+\hat \varepsilon_t^\tau= (\frac{1}{\beta}-\gamma)  \hat b_{t-1} -\frac{m+R b}{\pi^2} \hat\pi_t +\frac{1}{\pi}\hat m_{t-1} + \frac{b}{\pi} \hat R_{t-1} \label{eq:linear_gbc}\\
&\text{Money Demand:\ \ \ \ \ \ \ \ \ \  }  \hat m_t= \frac{m}{c}\hat c_t -\frac{1}{\sigma} \frac{m}{R(R-1)} \hat R_t \label{eq:linear_m} \\ 
&\text{Output:\ \ \ \ \ \ \ \ \ \ \ \ \ \ \ \ \ \ \ \ \ }  \frac{\sigma(1-\eta)+\eta+\varphi}{1+\varphi}\frac{1}{c}\hat c_t= \hat \varepsilon^y_t  \label{eq:linear_y}
\end{align}
Note that $\hat x_t$ denotes the deviation of variable $x_t$ from steady state. We now consider determinacy of the linearized system. We can rewrite \eqref{eq:linear_ee}--\eqref{eq:linear_y} as a bivariate forward-looking system of the form

\begin{equation}
\begin{bmatrix}
\hat  \pi_t \\
\hat b_t
\end{bmatrix}=
\begin{bmatrix}
B_{11} & B_{12} \\
B_{21}  &B_{22}
\end{bmatrix}
\begin{bmatrix}
\hat E_t \pi_{t+1} \\
\hat E_t b_{t+1}
\end{bmatrix}+
\begin{bmatrix}
C_{11} & C_{12} & C_{13}\\
C_{21}  &C_{22}& C_{23}
\end{bmatrix}
\begin{bmatrix}
\hat  \varepsilon^R_t  \\
\hat  \varepsilon_{t}^\tau \\
\hat \varepsilon^y_t
\end{bmatrix}
\label{linear_BK}
\end{equation}
According to \cite{BK1980}, the solution to \eqref{linear_BK} is locally unique if and only if one eigenvalue is within the unit circle and the other eigenvalue is outside the unit circle. To assess this we bring the above expressions into an explicit form,
\begin{equation}
\begin{bmatrix}
 \frac{b\alpha}{\pi}-\frac{m\alpha}{\pi\sigma R(R-1)}&\frac{1}{\beta}-\gamma \\
\alpha&0
\end{bmatrix}
\begin{bmatrix}\hat{\pi} _t\\\hat{b} _t
\end{bmatrix}=
\begin{bmatrix}
\frac{m+\frac{1}{\beta}\pi b}{\pi^2}-\frac{m\alpha}{\sigma R(R-1)} & 1\\
\frac{1}{\beta}&0
\end{bmatrix}
\begin{bmatrix}E_t\hat{\pi} _{t+1}\\E_t\hat{b} _{t+1}
\end{bmatrix}+
\begin{bmatrix}
\frac{m\delta}{\pi\sigma R(R-1)}-\frac{b\delta}{\pi}&0&-\frac{m}{\pi c \xi}\\
-\delta&0&-\frac{\sigma\pi}{\beta c \xi}
\end{bmatrix}
\begin{bmatrix}
\hat{\varepsilon}^R _t\\
\hat  \varepsilon_{t}^\tau \\
\hat{\varepsilon}^y _t
\end{bmatrix}
\end{equation}
$\xi=\frac{\sigma(1-\eta)+\eta+\varphi}{(1+\varphi)c}$.
Therefore, 
\begin{equation}
\begin{bmatrix}
B_{11}&B_{12}\\
B_{21}&B_{22}
\end{bmatrix}=
\begin{bmatrix}
 \frac{b\alpha}{\pi}-\frac{m\alpha}{\pi\sigma R(R-1)}&\frac{1}{\beta}-\gamma \\
\alpha&0
\end{bmatrix}^{-1}
\begin{bmatrix}
\frac{m+\frac{1}{\beta}\pi b}{\pi^2}-\frac{m\alpha}{\sigma R(R-1)} & 1\\
\frac{1}{\beta}&0
\end{bmatrix}
\end{equation}
\begin{equation}
\begin{bmatrix}
C_{11}&C_{12}&C_{13}\\
C_{21}&C_{22}&C_{23}
\end{bmatrix}=
\begin{bmatrix}
 \frac{b\alpha}{\pi}-\frac{m\alpha}{\pi\sigma R(R-1)}&\frac{1}{\beta}-\gamma \\
\alpha&0
\end{bmatrix}^{-1}
\begin{bmatrix}
\frac{m\delta}{\pi\sigma R(R-1)}-\frac{b\delta}{\pi}&0&-\frac{m}{\pi c D}\\
-\delta&0&-\frac{\sigma\pi}{\beta c D}
\end{bmatrix}
\end{equation}
The two eigenvalues are $\frac{1}{\alpha\beta}$ and $\frac{1}{1/\beta-\gamma}$. Then a unique solution takes the form:
\begin{equation}
\begin{bmatrix}
\hat  \pi_t \\
\hat b_t
\end{bmatrix}=
\begin{bmatrix}
D_{11} & D_{12} & D_{13}\\
D_{21}  &D_{22}& D_{23}
\end{bmatrix}
\begin{bmatrix}
\hat  \varepsilon^R_t  \\
\hat  \varepsilon_{t}^\tau \\
\hat  \varepsilon^y_t
\end{bmatrix}
\label{linear_solution}
\end{equation}
%

%\newpage

\FloatBarrier
%\clearpage
  
\subsection{Deep reinforcement learning parameterisation}
\label{sec:app_para}
 
%\FloatBarrier

\begin{table}[hbt!]
\centering
\resizebox{0.9\textwidth}{!}{
\begin{tabular}{l|ccl}
\toprule
parameter & AMP ($\pi^*$) & PMP ($\pi_L$) & description \\
\hline
\hline
 \multicolumn{4}{c}{{\it action bounds}} \\
\hline
$c^{act}_{min}$     & 1.005  & 1.000  & minimal consumption choice\\
$c^{act}_{max}$     & 1.015  & 1.003  & maximal consumption choice\\
$b^{act}_{min}$     & 4.000  & 3.965  & minimal bond holdings\\
$b^{act}_{max}$     & 4.080  & 4.045  & maximal bond holdings\\
$n_{min}$           & 0.990  & 0.990  & minimal hours worked\\
$n_{max}$ 		    & 1.010  & 1.010  & maximal hours worked\\
\hline
 \multicolumn{4}{c}{{\it initial state bounds}} \\
\hline
$m_{min}$           & 1.670  & 2.010  & minimal money holdings\\
$m_{max}$           & 1.750  & 2.110  & maximal money holdings\\
$b_{min}$           & 3.960  & 3.960  & minimal bond holdings\\
$b_{max}$           & 4.040  & 4.040  & maximal bond holdings\\
$c_{min}$           & 0.995  & 0.997  & minimal consumption\\
$c_{max}$           & 1.005  & 1.003  & maximal consumption\\
$\pi_{min}$         & 1.005  & 1.000  & minimal inflation\\
$\pi_{max}$         & 1.015  & 1.003  & maximal inflation\\
$n_{min}$           & 0.990  & 0.990  & minimal hours worked\\
$n_{max}$           & 1.010  & 1.010  & maximal hours worked\\
\hline
 \multicolumn{4}{c}{{\it learning algorithm}} \\
\hline
$\alpha_{learn}$	& 1.0e-5 & 1.0e-5  & learning rate \\
$d^{min}_u$         & 1.0e-7 & 1.0e-7  & utility difference (episode termination)\\
$\tau_{learn}$	    & 1.0e-3 & 1.0e-3  & target smoothing coefficient \\
$N_{train}$			& 2.5e6  & 2.5e6   & training steps (experiment) \\
$N_{interval}$		& 1.0e4  & 1.0e4   & training steps (between test episodes)\\
$N_{test}$			& 10	 & 10      & number of test episodes between training intervals\\
$N_{epi}^{max}$     & 2.5e4  & 2.5e4   & max. steps / episode (training or testing)\\
$N_{burn}$			& 1.0e4  & 1.0e4   & initial burn-in random actions\\
$N_{mem}$			& 2.5e4  & 2.5e4   & max. memory of state transitions\\
$N_{batch}$			& 256	 & 256     & batch size for parameter updates\\
$N^{hidden}_{layers}$& 2	 & 2       & number of hidden layers in $\mathcal{P}_{\phi}$\\
$N^{hidden}_{nodes}$& 32	 & 32      & number of nodes in hidden layers of $\mathcal{P}_{\phi}$\\
\bottomrule
  \end{tabular}
  }
\caption{Learning parameters following \cite{Haarnoja2018sac} for active monetary policy (AMP) around $\pi^*$ and passive monetary policy (PMP) around $\pi_L$. Action bounds refer to the minimal and maximal actions the agent can choose from. Initial state bounds refer to the min./max. values of initial state variables episodes can be sampled from. Learning algorithm parameters relate to parts of the agent's optimisation process which are not directly related to economic quantities (state variables). Additional learning parameters are the number of policy $\mathcal{P}_{\phi}$ and critique $Q_{\theta}$ updates per step (1), and the use of automatic entropy tuning (true).}
\label{tab:para_2}
\end{table}
%\vspace{3cm}
\FloatBarrier

%\newpage
%\clearpage

\subsection{Additional learning results}
 
\begin{figure}[b]  % spans both columns
\begin{subfigure}{0.55\textwidth}
\hspace*{-1.cm}
\includegraphics[width=\linewidth]{./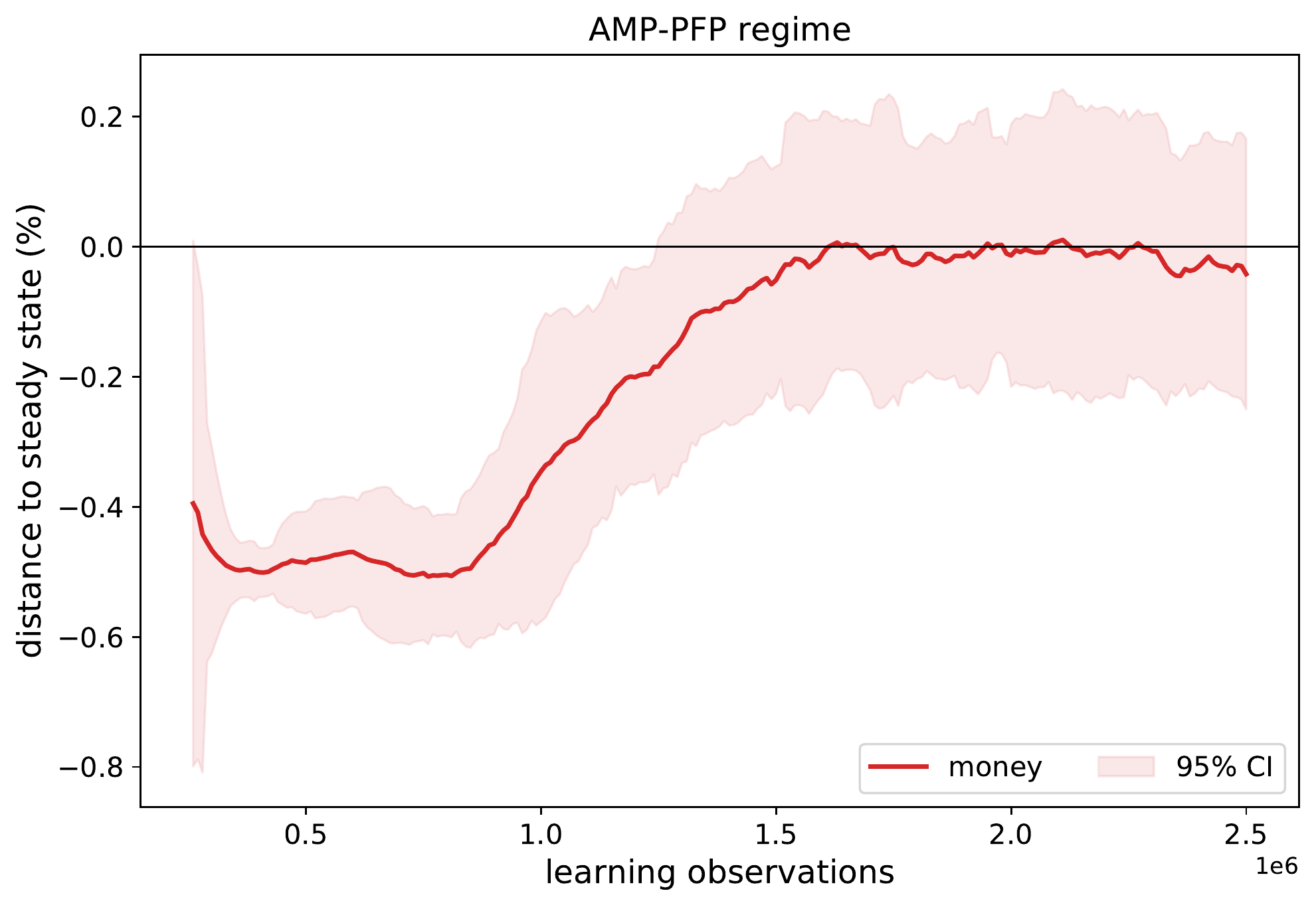}
%\caption{Network 1}
\end{subfigure}
%\hfill % maximize the horizontal distance between the graphs
\begin{subfigure}{0.55\textwidth}
\hspace*{-1.cm}
\includegraphics[width=\linewidth]{./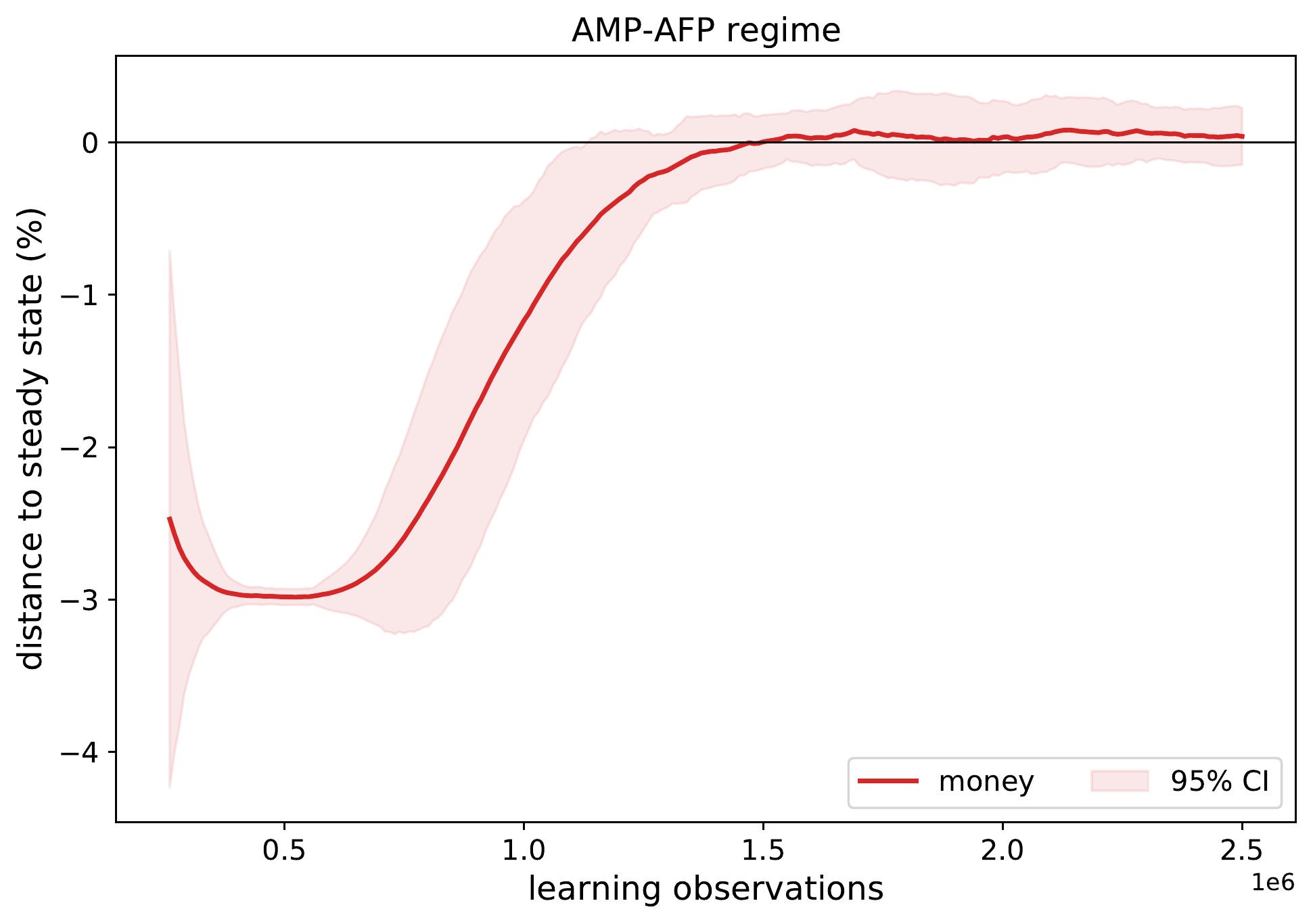}
%\caption{Network  2}
\end{subfigure}

%\vspace*{-.5cm}
%\bigskip  % some extra vertical whitespace
\begin{subfigure}{0.55\textwidth}
\hspace*{-1.cm}
\includegraphics[width=\linewidth]{./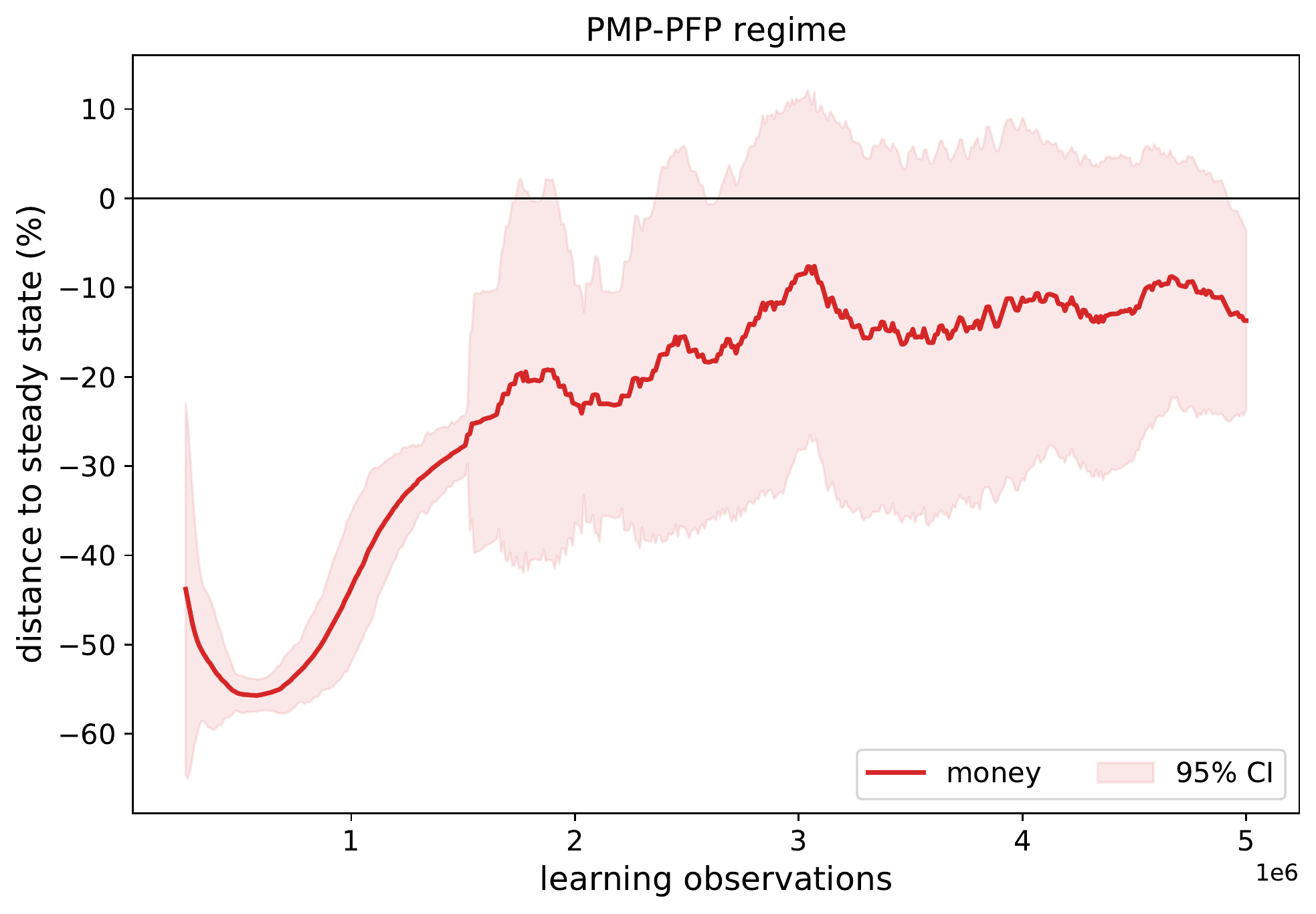}
%\caption{Network  3}
\end{subfigure}
%\hfill % maximize the horizontal distance between the graphs
\begin{subfigure}{0.55\textwidth}
\hspace*{-1.cm}
\includegraphics[width=\linewidth]{./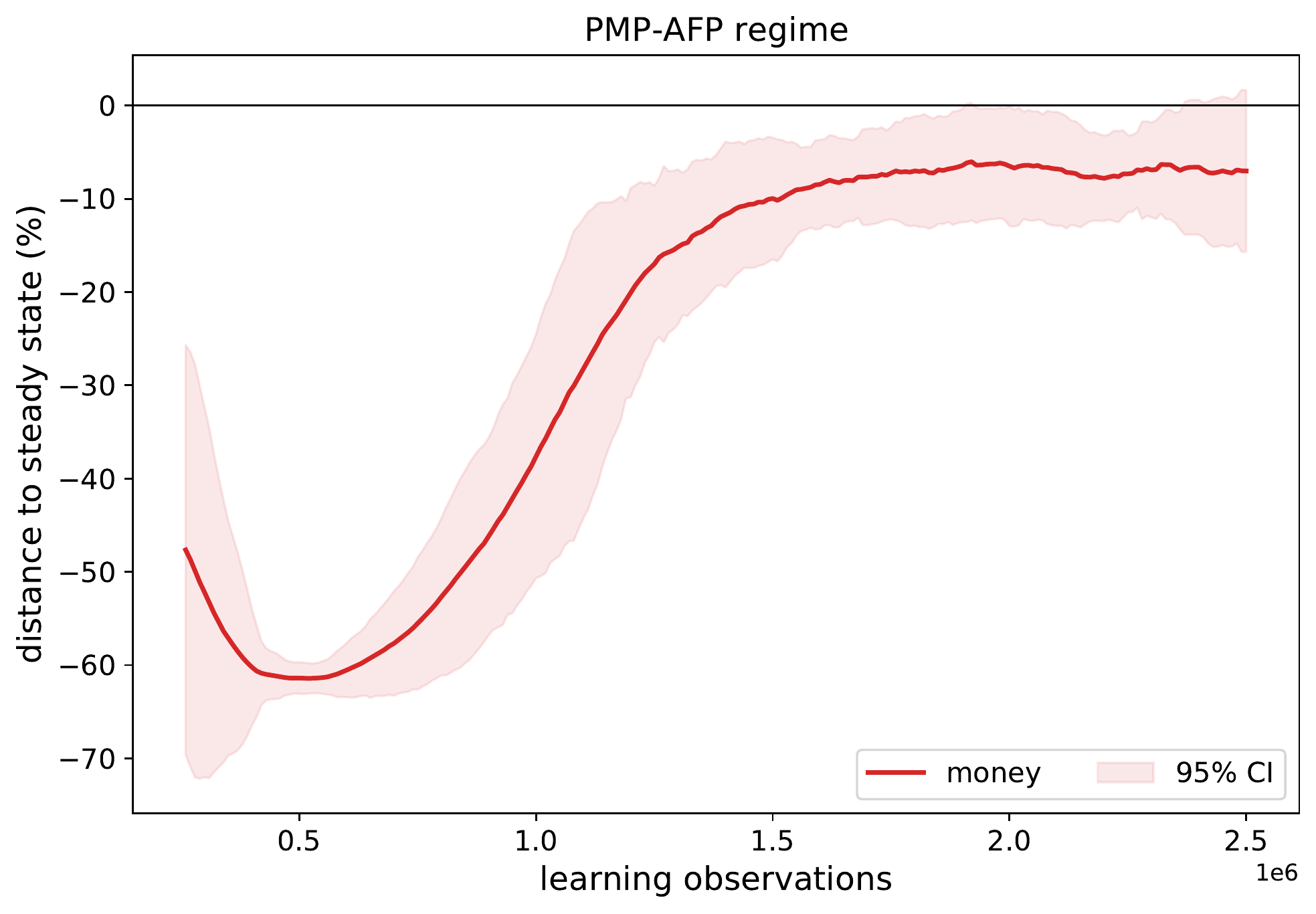}
%\caption{Network  4}
\end{subfigure}
\caption{Comparison of steady state convergence of money holdings for different policy regimes  at end of test episodes: monetary policy (rows) and fiscal policy (columns). Vertical axis shows percentage of respective steady state value. Shaded areas show 95\% confidence intervals. Source: Authors' calculations.} \label{fig:ss_money}
\end{figure}

\newpage
\begin{figure}[b]  % spans both columns
\begin{subfigure}{0.55\textwidth}
\hspace*{-1.cm}
\includegraphics[width=\linewidth]{./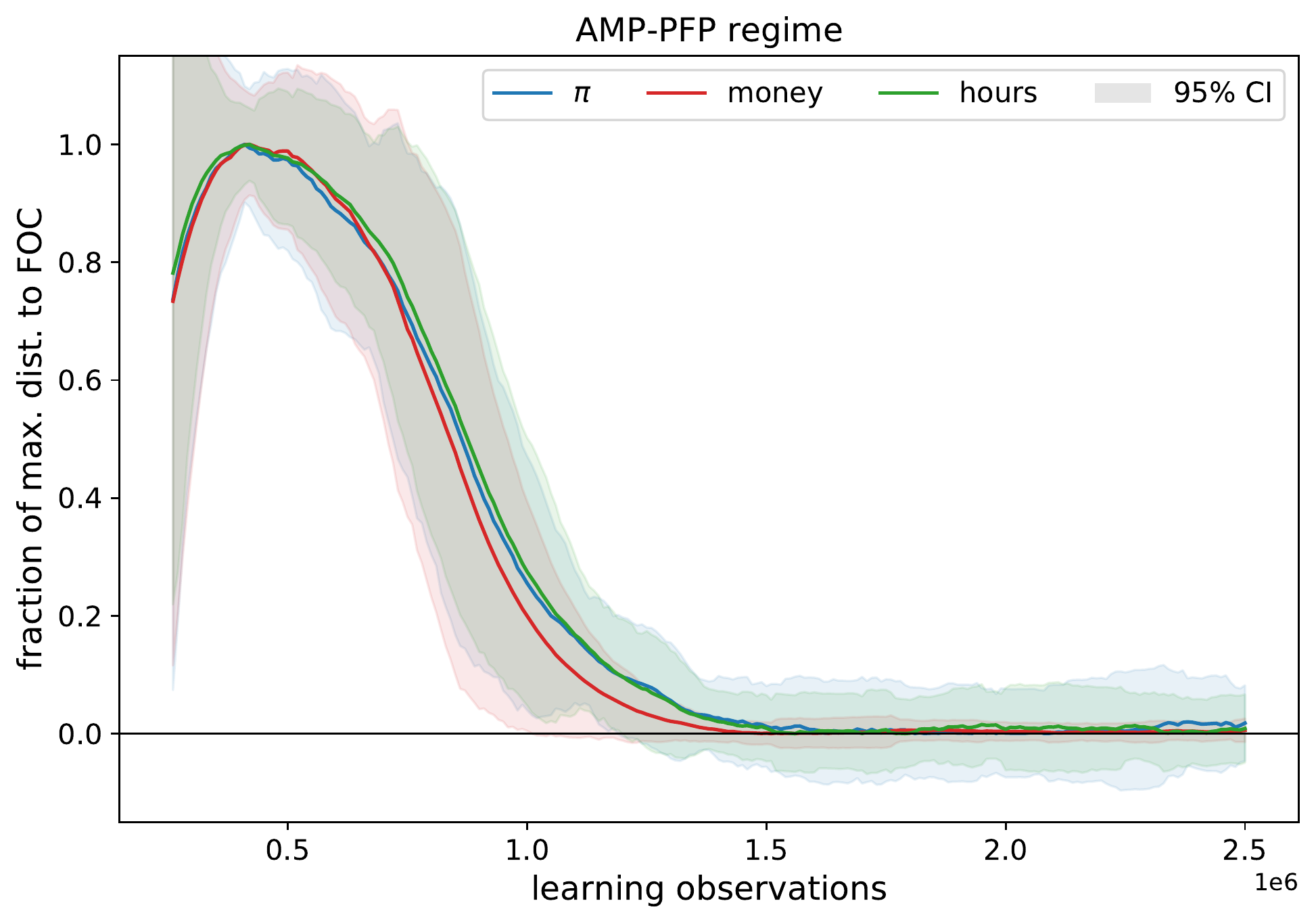}
%\caption{Network 1}
\end{subfigure}
%\hfill % maximize the horizontal distance between the graphs
\begin{subfigure}{0.55\textwidth}
\hspace*{-1.cm}
\includegraphics[width=\linewidth]{./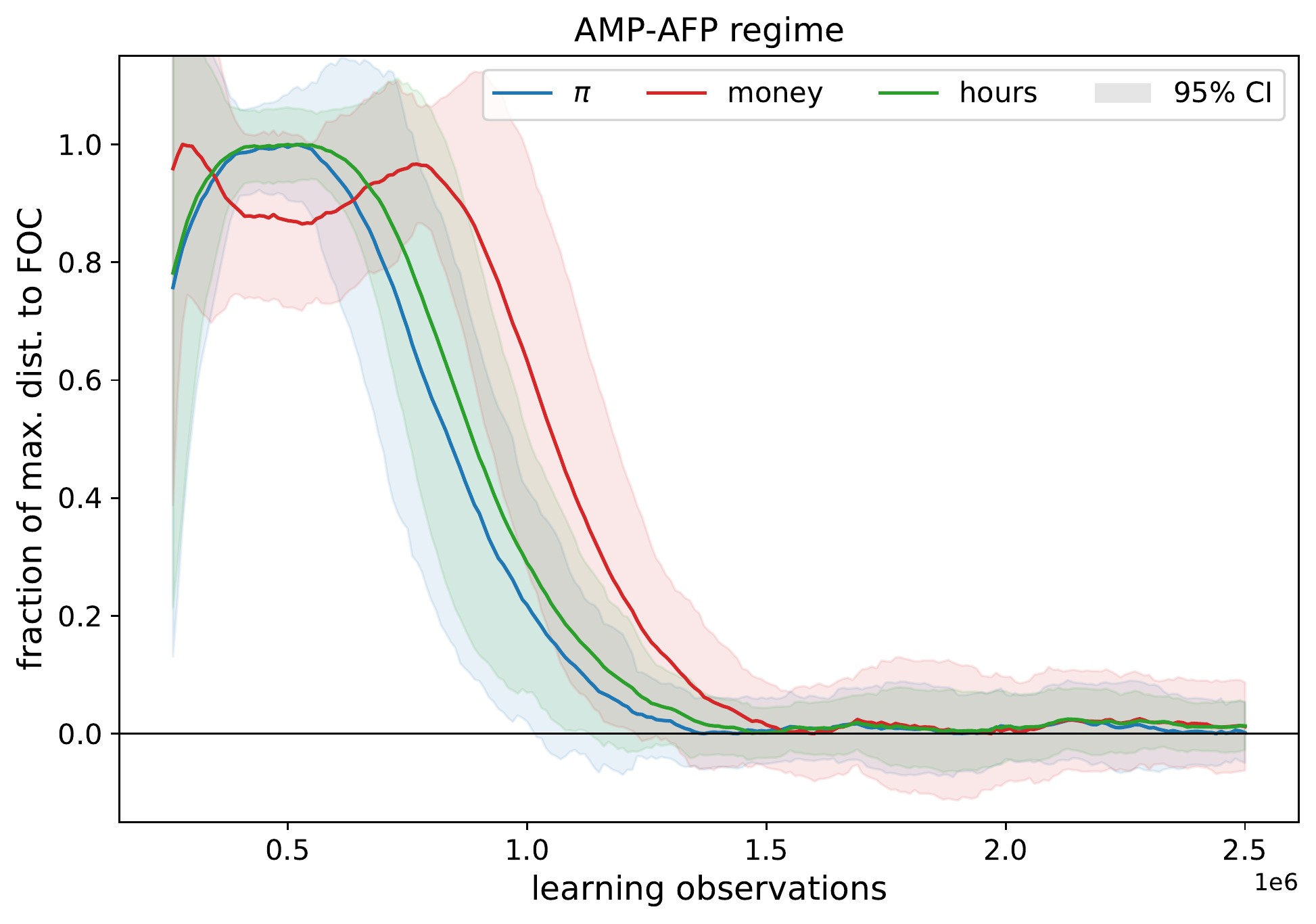}
%\caption{Network  2}
\end{subfigure}

%\vspace*{-.5cm}
%\bigskip  % some extra vertical whitespace
\begin{subfigure}{0.55\textwidth}
\hspace*{-1.cm}
\includegraphics[width=\linewidth]{./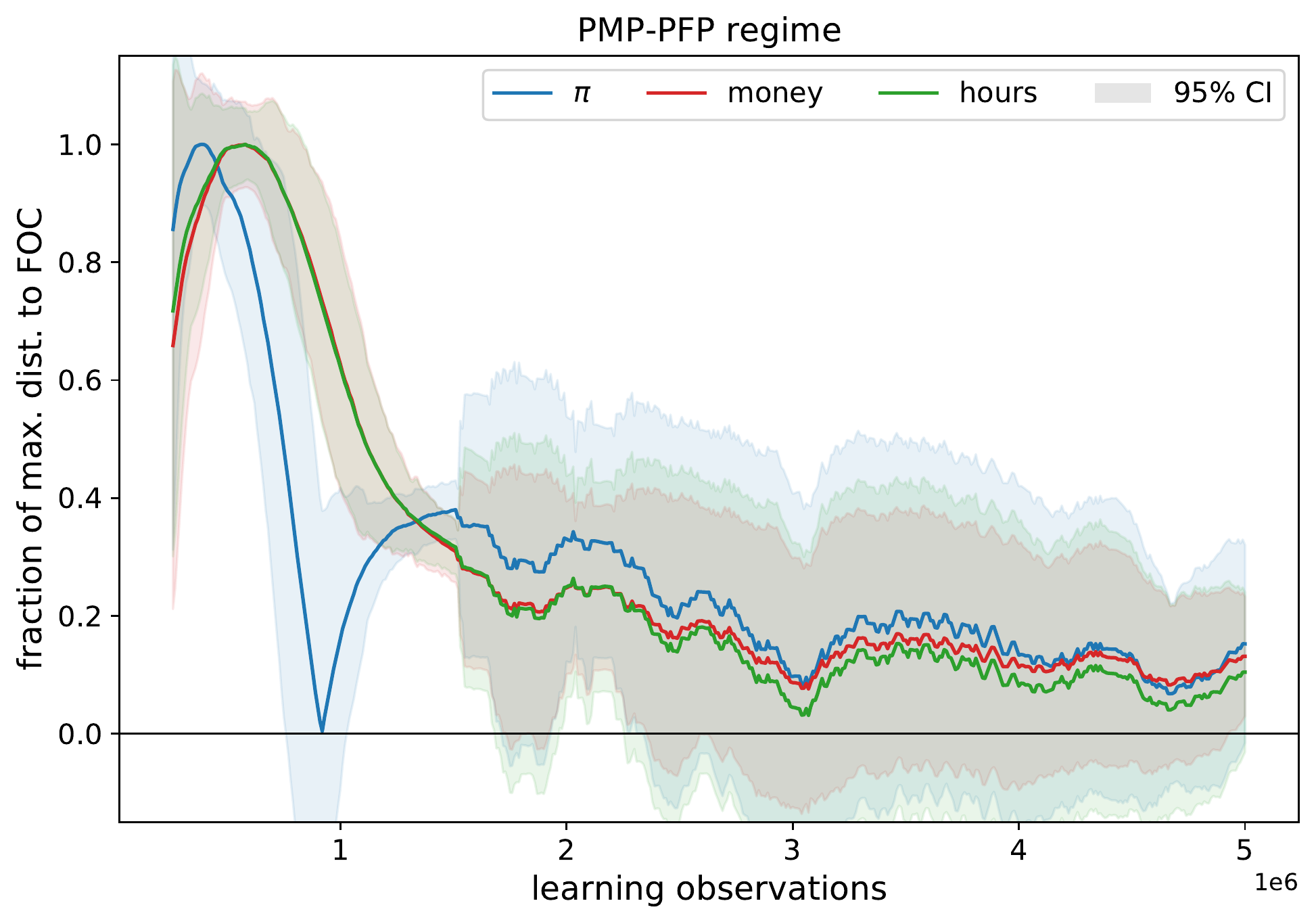}
%\caption{Network  3}
\end{subfigure}
%\hfill % maximize the horizontal distance between the graphs
\begin{subfigure}{0.55\textwidth}
\hspace*{-1.cm}
\includegraphics[width=\linewidth]{./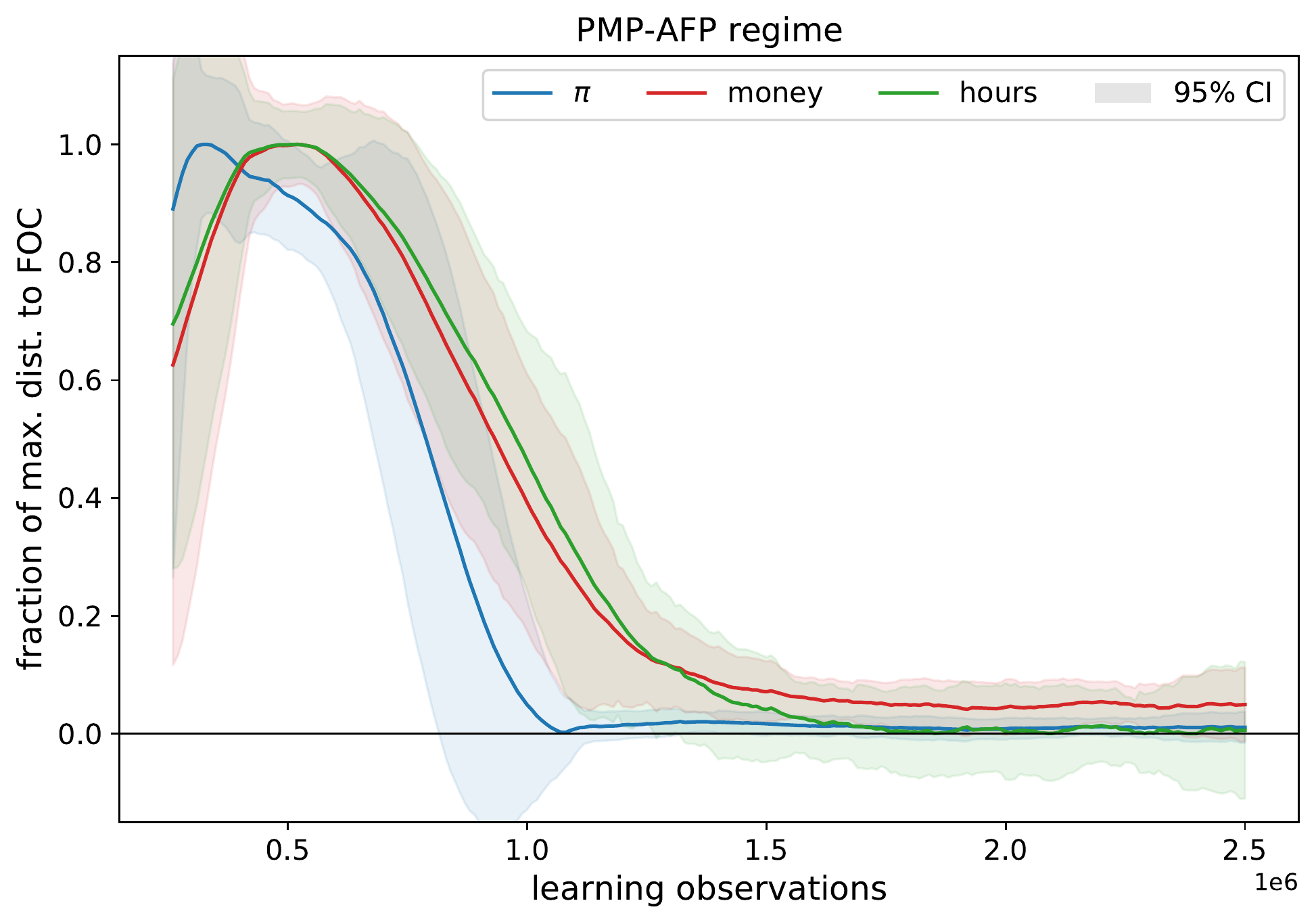}
%\caption{Network  4}
\end{subfigure}
\caption{Comparison of FOC-learning for different policy regimes at end of test episodes: monetary policy (rows) and fiscal policy (columns). Shaded areas show 95\% confidence intervals. Source: Authors' calculations.} \label{fig:foc_all}
\end{figure}

\end{document}